\documentclass[aps,prx,reprint,twocolumn,floatfix]{revtex4-2}
\usepackage{iceberg}
\usepackage{graphicx}
\usepackage[caption=false]{subfig}
\usepackage{subfig}
\usepackage{amsmath}
\usepackage{amssymb}
\usepackage{amsthm}
\usepackage{multirow}
\usepackage[sort&compress]{natbib}

\usepackage{etoolbox}

\makeatletter
\patchcmd{\@makecaption}{\@ifdim{\wd\@tempboxa >\hsize}}{\@firstoftwo}{}{}
\makeatother

\begin{document}

\frenchspacing

\title{The Pinnacle Architecture: Reducing the cost of breaking RSA-2048 to 100 000 physical qubits using quantum LDPC codes}

\author{Paul Webster}
\email{paul@iceberg-quantum.com}
\author{Lucas Berent}
\author{Omprakash Chandra}
\author{Evan T.\ Hockings}
\author{Nou\'edyn Baspin}
\author{Felix Thomsen}
\author{Samuel C. Smith}
\author{Lawrence Z.\ Cohen}
\email{larry@iceberg-quantum.com}
\affiliation{Iceberg Quantum, Sydney}

\begin{abstract}
The realisation of utility-scale quantum computing inextricably depends on the design of practical, low-overhead fault-tolerant architectures.
We introduce the \textit{Pinnacle Architecture}, which uses quantum low-density parity check (QLDPC) codes to allow for universal, fault-tolerant quantum computation with a spacetime overhead significantly smaller than that of any competing architecture.
With this architecture, we show that 2048-bit RSA integers can be factored with fewer than one hundred thousand physical qubits, given a physical error rate of $10^{-3}$, code cycle time of $1$ \textmu s and a reaction time of $10$ \textmu s.
We thereby demonstrate the feasibility of utility-scale quantum computing with an order of magnitude fewer physical qubits than has previously been believed necessary.
\end{abstract}

\maketitle

\section{Introduction}
Quantum computers offer the promise of efficient solutions to currently intractable problems with the potential to allow breakthroughs in areas such as cryptography~\cite{shor_algorithms_1994}, materials science and chemistry~\cite{lloyd_universal_1996}.
However, due to the precision required and the significant levels of noise that afflict all engineered quantum systems, this is only possible if quantum computers are fault tolerant~\cite{gottesman_opportunities_2022}. 
Fault-tolerant quantum architectures are therefore a cornerstone of all efforts to build useful quantum computers.

Sophisticated fault-tolerant architectures have been developed based on the surface code~\cite{fowler_surface_2012, horsman_surface_2012, litinski_game_2019, gidney_how_2021, gidney_how_2025}. 
However, these suffer from a very high overhead, since they require hundreds or thousands of physical qubits to encode a single logical qubit with a low enough failure rate to allow utility-scale computations. 
As a result, utility-scale quantum computers based on such architectures are expected to require at least one million physical qubits~\cite{gidney_how_2025, mohseni_how_2025}. 
Scaling quantum hardware to this size poses formidable challenges~\cite{mohseni_how_2025}.
Developing fault-tolerant quantum architectures with lower overhead is therefore of the utmost importance.

To meet this goal, we introduce the \textit{Pinnacle Architecture}.
This architecture achieves substantial spacetime reductions compared with prior state-of-the-art architectures~\cite{gidney_how_2025,yoder_tour_2025} by reducing the space overhead relative to surface code architectures without a commensurate increase in time overhead.
We realise these savings through the use of processing units constructed from bridged QLDPC code blocks equipped with modular, efficient gadgets for performing gates by generalised surgery~\cite{webster_explicit_2025}.
We also introduce a new component---the \textit{magic engine}---which exploits the multiple logical qubits of QLDPC codes to simultaneously support magic state distillation and injection in a single code block, allowing for constant throughput of high-fidelity magic states with low overhead.
Moreover, we use the technique of Clifford frame cleaning~\cite{chamberland_universal_2022} to develop a method for efficient parallelism of operations across processing units.
In particular, this allows for parallel access to a quantum memory that can allow for spacetime overhead reductions by enabling algorithms to be parallelised by duplicating processing units while keeping only a single memory.
Scalability and hardware-compatibility is also ensured through a modular structure which ensures that connectivity between physical qubits is only required on a length scale constant in the number of logical qubits.
We summarise the main features of the architecture in \cref{sec:summary-architecture} and, after reviewing relevant background concepts in \cref{sec:background},  we then provide a full presentation of the general architecture \cref{sec:architecture}, and a specific instantiation in \cref{sec:instantiation}.

We benchmark the performance of the Pinnacle Architecture by presenting a compilation to a standard application: factoring 2048-bit RSA integers~\cite{gidney_how_2021, gidney_how_2025}.
With standard hardware assumptions (i.e., a physical error rate of $10^{-3}$, code cycle time of 1 \textmu s and reaction time of 10 \textmu s), we show that factoring can be achieved with fewer than one hundred thousand physical qubits.
This outperforms the previous best result by an order of magnitude~\cite{gidney_how_2025}.
We further show the broad applicability of the architecture by showing that it allows for classically intractable instances of the problem of determining the ground-state energy of the Fermi-Hubbard model to be achieved with tens of thousands of physical qubits, under the same assumptions.
Again, this amounts to an order-of-magnitude improvement on the best prior end-to-end resource estimates~\cite{kivlichan_improved_2020}.
In addition to these results, we also apply our compilations under alternative hardware assumptions to provide optimised resource estimates applicable across a range of hardware platforms.
We summarise these results in \cref{sec:summary-applications} and present further details in \cref{sec:applications}.

Through the Pinnacle Architecture, we thus open the door to utility-scale quantum computing on one hundred thousand physical qubit devices.
This has the potential to significantly accelerate the timescale for commercialised and impactful quantum computers.

\section{Summary of Contributions}
\subsection{Pinnacle Architecture}
\label{sec:summary-architecture}
The Pinnacle Architecture consists of:
\begin{itemize}
    \item \textit{Processing Units} consisting of bridged processing blocks each constructed from a QLDPC code block with an ancillary measurement gadget system~\cite{webster_explicit_2025}.
    An arbitrary logical Pauli product measurement can be performed on the logical qubits of the unit in each logical cycle.
    \item \textit{Magic Engines} consisting of a QLDPC code block along with ancillary systems for injecting noisy magic states.
    In each logical cycle, the magic engine stores a high-fidelity magic state as it is injected into a processing unit in parallel with hosting magic state distillation to prepare a high-fidelity magic state for the next logical cycle. 
    It thereby provides a high-fidelity magic state per logical cycle for each processing unit to allow for universal quantum computing.
    \item \textit{Memory} included as an optional component that allows for especially low overhead quantum storage in code blocks, which can be accessed by processing units via ports.
\end{itemize}
\Cref{fig:architecture} shows how the Pinnacle Architecture is assembled from these modules.

Compilation is performed via Pauli-based computation~\cite{bravyi_trading_2016}.
This allows for universal fault-tolerant quantum computing on arbitrarily many logical qubits with a time cost that scales with the $T$ count.

\begin{figure*}[]
    \centering
    \subfloat[Example of the Pinnacle Architecture with one processing unit and approximately one hundred thousand physical qubits.\label{architecture-a}]{ \includegraphics[width=\textwidth]{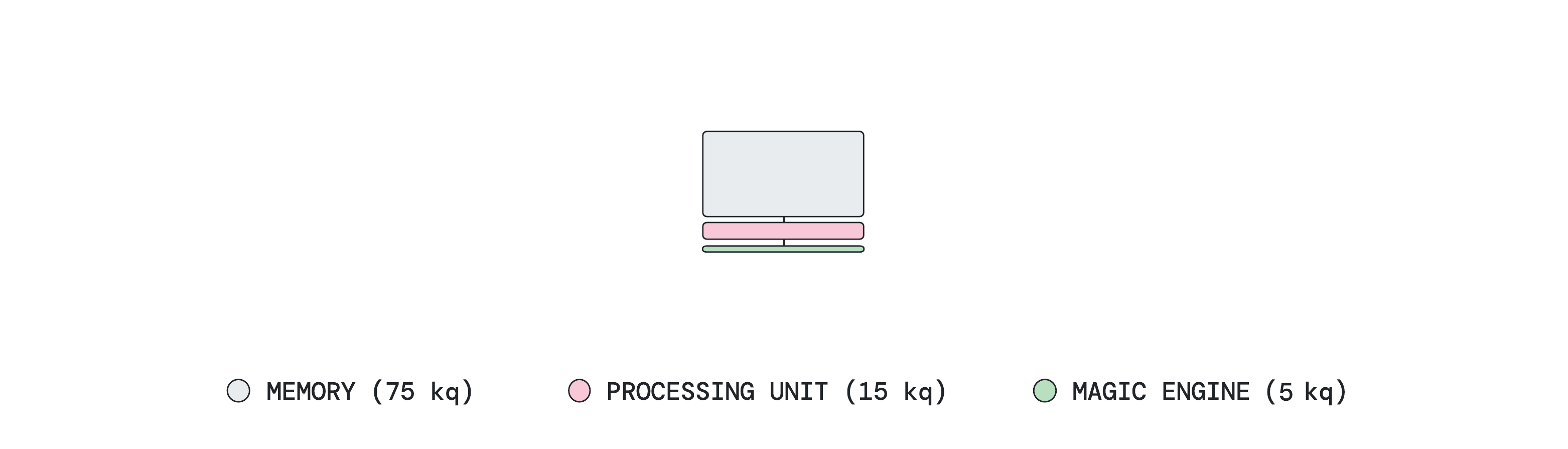} } 
    \vspace{-1em}
    \hfill
    
    \subfloat
    [
        Example of the Pinnacle Architecture with 81 processing units and approximately one million physical qubits.\label{architecture-b}
    ]
    {
        \includegraphics[width=\textwidth]{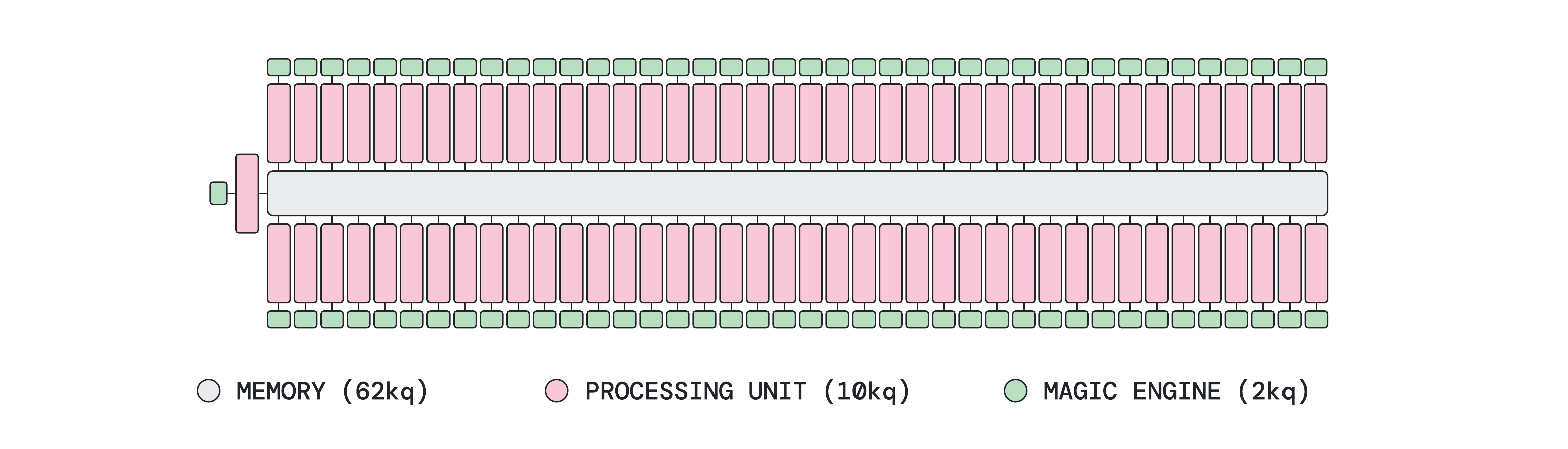}
    } 
    \caption{Examples of the Pinnacle Architecture. These two examples represent specific examples of different space-time trade-offs and code block choices, optimised for RSA-2048 factoring in different hardware regimes.
    Example \textbf{(a)} allows factoring in one month with a physical error rate of $p=10^{-3}$ and a code cycle time of $t_c=1$ \textmu s.
    Example \textbf{(b)} allows factoring in three months with a physical error rate of $p=10^{-4}$ and a code cycle time of $t_c=1$ ms. 
    Shorter runtimes can be achieved by adding more processing units, increasing paralellisation at the cost of additional physical qubits.}
    \label{fig:architecture}
\end{figure*}

Features of the Pinnacle Architecture include:
\begin{itemize}
\item \textit{Low Spacetime Overhead:} The use of QLDPC codes allows order-of-magnitude reductions in physical qubit number compared with surface code architectures. 
This is achieved without the corresponding increase in time overhead of the previous state-of-the-art QLDPC code architecture---the \textit{bicycle architecture} of Ref.~\cite{yoder_tour_2025}--through the use of efficient gadget systems that allow for arbitrary logical Pauli measurements instead of only a subset of them.
It therefore substantially reduces the total spacetime overhead compared to previous fault-tolerant architectures.
\item \textit{Limited Connectivity and Routing:} 
The architecture requires only interactions between physical qubits on the scale of a processing block, which is constant in the number of logical qubits.
This means that it does not depend on all-to-all connectivity but instead is implementable on hardware platforms that support quasi-local connections between physical qubits separated by a bounded distance.
Moreover, arbitrary quantum circuits can be performed efficiently using a static configuration of processing blocks without depending on long-distance routing.

\item \textit{Modularity and Parallelism:} Computations can be separated across multiple processing units with limited connectivity between them, allowing compatibility with modular hardware.
Efficient parallelism of non-Clifford gates is also supported to supplement this modular structure.
In particular this also allows parallel, read-only access to memory by multiple processing units.
\end{itemize}

The Pinnacle architecture therefore offers a new alternative architecture that is both practical for implementation on a range of hardware platforms and significantly more efficient than previous state-of-the-art alternatives.

\subsection{Results}
\label{sec:summary-applications}
For concreteness, we present a specific instantiation of the Pinnacle Architecture using the family of generalised bicycle codes and the modular, efficient measurement gadgets introduced in Ref.~\cite{webster_explicit_2025}.
Based on compilation to this instantiation, we present resource estimates for two benchmark applications.

\begin{figure}[]
    \centering
    \includegraphics[width=\linewidth]{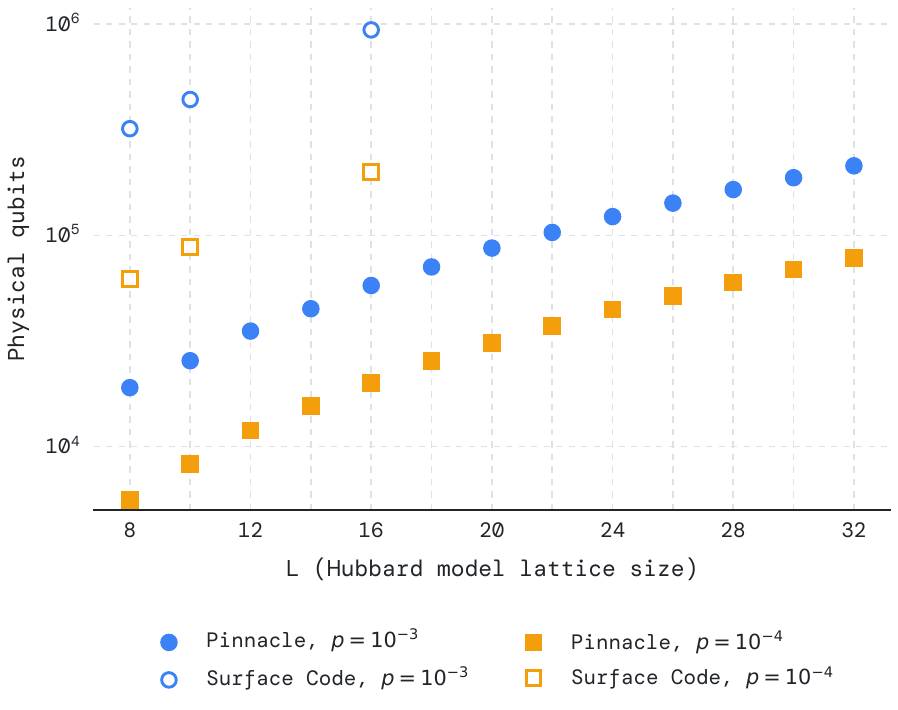}
    \caption{Physical qubits required for determining the ground state energy of the Fermi-Hubbard model on an $L\times L$ lattice to 0.5\% relative precision.
    Surface code values correspond to the minimum quoted number of physical qubits with $u/\tau=4$ in Ref.~\cite{kivlichan_improved_2020}.
    \label{fig:FH-graph}}
\end{figure}

\begin{figure*}
    \centering
    \subfloat[\label{fig:heatmaps:1e-3}]{
        \includegraphics[width=0.9\textwidth]{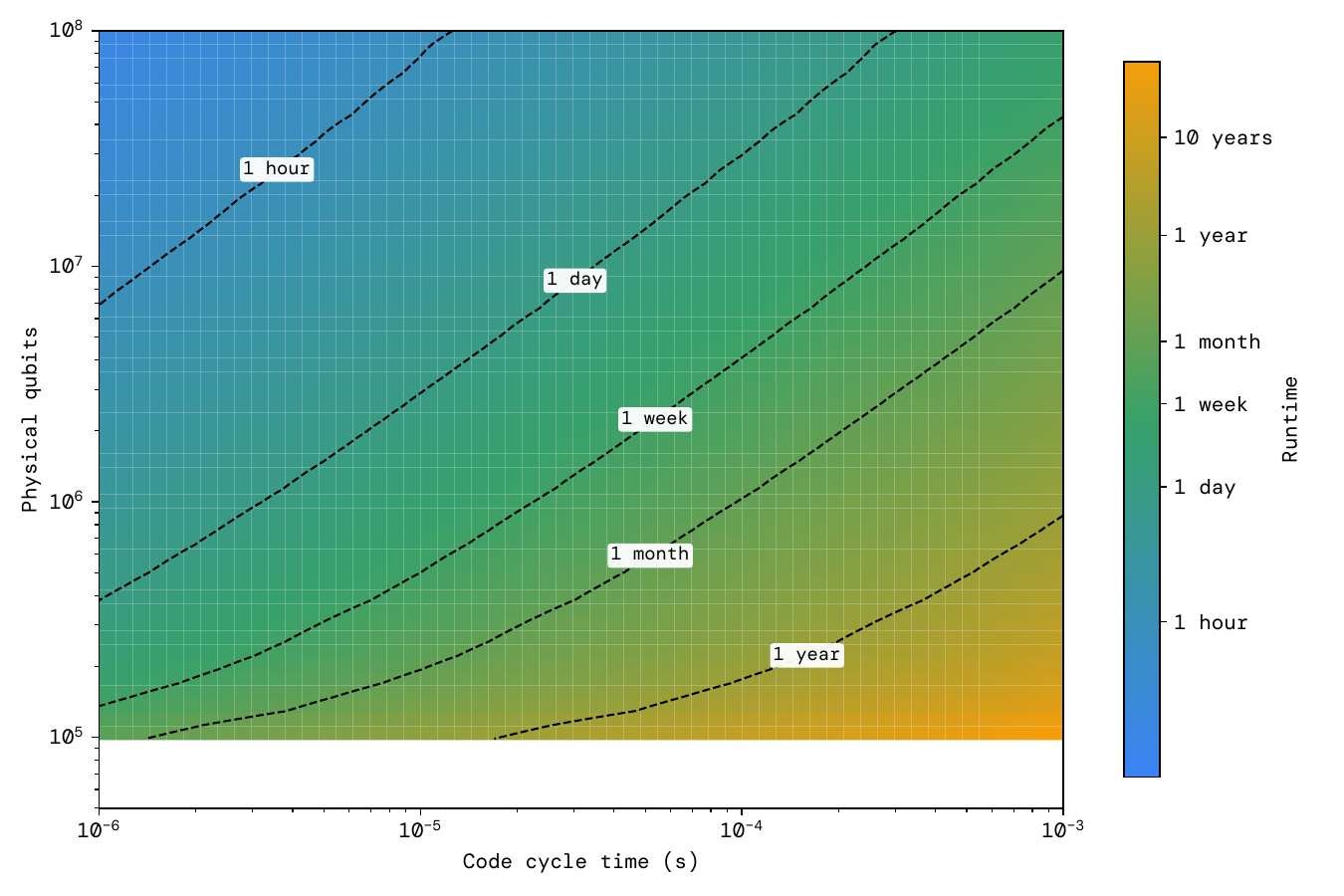}
    }
    \hfill
    \subfloat[\label{fig:heatmaps:1e-4}]{
        \includegraphics[width=0.9\textwidth]{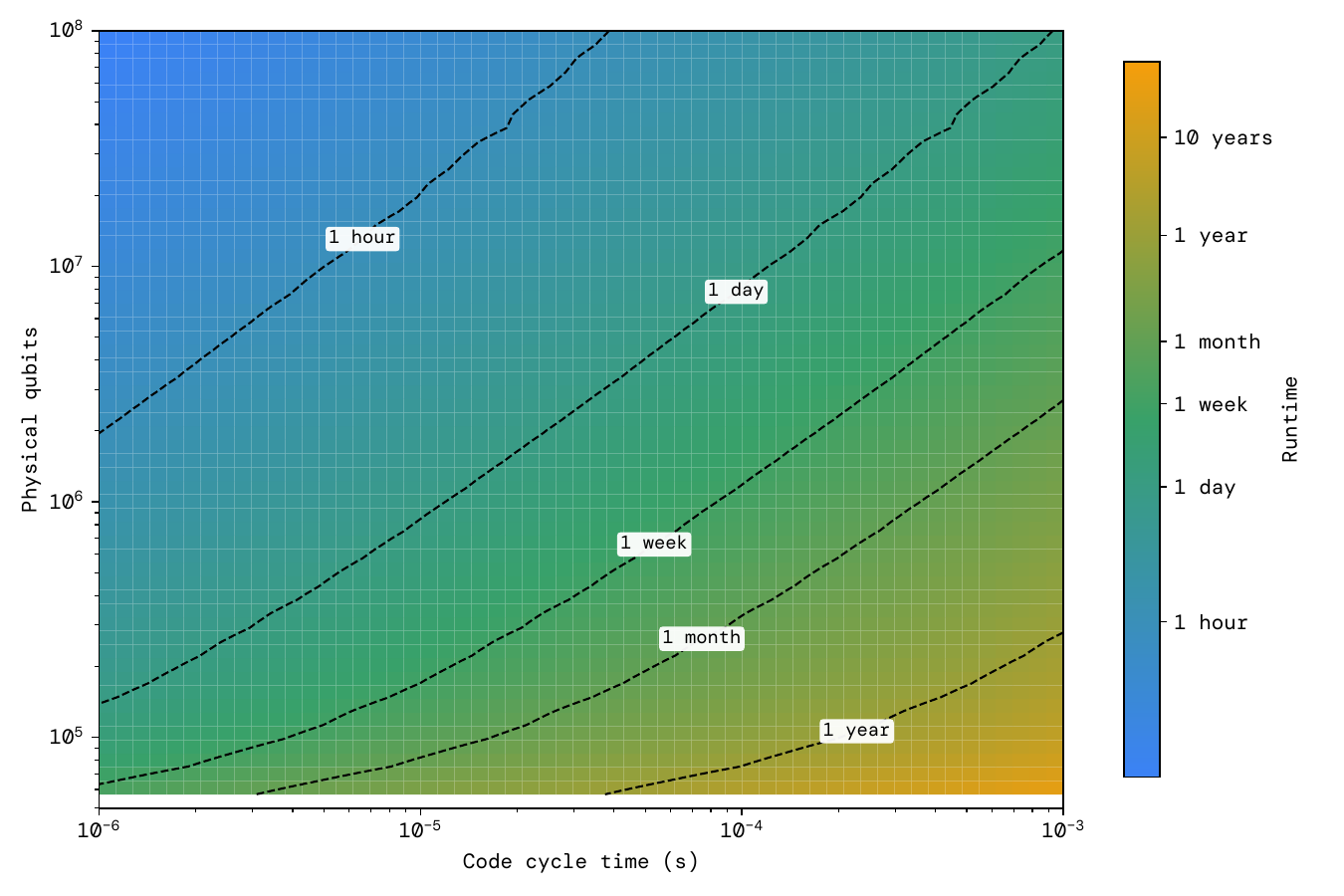}
    }
    \caption{Optimal expected runtime for factoring an RSA-2048 integer on the Pinnacle Architecture as a function of the number of physical qubits and the code cycle time at physical error rates of \textbf{(a)} $p=10^{-3}$ and \textbf{(b)} $p=10^{-4}$.
    White areas indicate insufficient physical qubits to implement the algorithm. 
    The reaction time in all cases is assumed to be equal to ten times the code cycle time.}
    \label{fig:heatmaps}
\end{figure*}

First, we consider determination of the ground state energy of the Fermi-Hubbard model via plaquette Trotterisation~\cite{campbell_early_2022}.
As shown in \cref{fig:FH-graph}, we achieve order-of-magnitude reductions in physical qubit number relative to the best available end-to-end surface code analysis~\cite{kivlichan_improved_2020}.
For example, we find that at a lattice size of $L=16$ (with a coupling strength of $u/\tau=4$), only 58 thousand physical qubits are required at a physical error rate of $p=10^{-3}$ and only 20 thousand at $p=10^{-4}$.
This compares with 940 thousand and 200 thousand, respectively, in Ref.~\cite{kivlichan_improved_2020}.
We achieve these results while maintaining a modest runtime per shot of 1--4 minutes with microsecond code cycle times or 1--3 days with millisecond code cycle times.

Second, we analyse factoring RSA integers using an algorithm based on that of Ref.~\cite{gidney_how_2025}.
\cref{fig:heatmaps} shows the required physical qubits and runtime to factor a 2048-bit integer for different code cycle times and physical error rates.
With standard assumptions of a physical error rate of $p=10^{-3}$ and a code cycle time of $1$ \textmu s and a $10$ \textmu s reaction time (see \cref{sec:background} for definitions of these timescales), factoring is possible with fewer than one hundred thousand physical qubits, compared with the previous best result of close to one million physical qubits~\cite{gidney_how_2025}.
Moreover, through parallelising the algorithm, we achieve an efficient spacetime trade-off that allows for low overhead factoring in feasible runtimes even with longer code cycle times.
For example, with a code cycle time of 1 ms, factoring can be completed in one month with 2.7 million physical qubits at a physical error rate of $p=10^{-4}$ (typical of trapped ions~\cite{hughes_trappedion_2025}) or with 9.5 million physical qubits at a physical error rate of $p=10^{-3}$ (typical of, for example, neutral atoms~\cite{zhou_resource_2025}).

We therefore conclude that the Pinnacle Architecture can be used to achieve utility-scale quantum computation with significantly reduced overhead across multiple applications and a range of hardware regimes.

\section{Background}
\label{sec:background}
In this section we review relevant concepts of fault-tolerant quantum computation with QLDPC codes.
\subsection{Code Blocks}
A code block is an instantiation of an $\llbracket n,k,d\rrbracket$ quantum error-correcting code. 
Error correction is facilitated by repeatedly performing a syndrome extraction circuit on the code block.
This circuit involves measuring a set of parity check operators, which collectively yield an error syndrome.
This is done with the use of $n_c$ ancilla qubits; in each code cycle, each of these is entangled with the code qubits in accordance with one of the parity check operators and then destructively measured.
A code block therefore requires a total of $n_{cb}=n+n_c$ physical qubits.
In order to ensure robustness against measurement errors, the results of $d_t=\Theta(d)$ code cycles must be combined to yield a reliable error syndrome; this is referred to as a \textit{logical cycle}.

We assume the use of QLDPC codes~\cite{mackay_sparse_2004, breuckmann_quantum_2021}.
These are defined by having low parity check operator weights and qubit degrees (i.e., the number of parity checks supported on each qubit).
Precisely, these check weights and qubit degrees are bounded by a constant independent of the code distance.
This ensures that syndrome extraction circuits can have constant depth, and therefore be fault tolerant.
The most widely used QLDPC codes are distance-$d$ (rotated) surface codes~\cite{fowler_surface_2012}, which are $\llbracket d^2,1,d \rrbracket$ codes with $n_c=d^2-1$ parity checks which have weight at most four and require only nearest-neighbour interactions. 
Surface code blocks therefore use $2d^2-1$ physical qubits to encode one logical qubit.
However, by relaxing the requirement of nearest-neighbour interactions, more general QLDPC codes can allow for many logical qubits to be encoded in a single code block.
This can allow for significant reductions in the overhead of physical qubits required per logical qubit~\cite{breuckmann_quantum_2021}.
The required non-local interactions are supported on a range of hardware platforms~\cite{yoneda_coherent_2021, malinowski_how_2023, bluvstein_quantum_2022, bombin_interleaving_2021, bravyi_highthreshold_2024}.

\subsection{Processing Blocks}
To allow for fault-tolerant quantum computation, instead of only passive storage of quantum information, the concept of a code block must be generalised to a \textit{processing block}.
A processing block allows for logical operations to be implemented on its encoded logical qubits. 

A QLDPC code block can be turned into a processing block by appending a measurement gadget system that allows for \textit{generalised lattice surgery}~\cite{cohen_lowoverhead_2022, williamson_lowoverhead_2024, ide_faulttolerant_2025}.
This construction ensures that the combined code block-gadget system constituting the processing block remains a QLDPC code, while also ensuring that measuring a selected Pauli logical operator of the code is equivalent to the product of a set of parity check measurements on the gadget system.
This allows for a logical Pauli to be measured in parallel with error correction within a logical cycle by performing a modified circuit for syndrome extraction on the full processing block
\footnote{Following Refs~\cite{yoder_tour_2025,litinski_game_2019}, we assume that the time required between logical measurements is negligible compared with the timescale of a logical cycle.}.
Arbitrary logical Pauli measurements can then be performed across multiple processing blocks by bridging the gadget systems of these blocks~\cite{cross_improved_2025, swaroop_universal_2026}. 
The number of physical qubits in the processing block is given by $n_{pb}=n_{cb}+n_G+n_b$, where $n_G$ is the number of physical qubits in the gadget system and $n_b$ is the number of physical qubits used to bridge it to another processing block.

This processing block construction is underpinned by a rich history of prior work on generalised lattice surgery.
For brevity, we have here outlined only the most relevant aspects, but we refer the reader to Ref.~\cite{he_extractors_2025} (particularly Sec.~3.2) for a more complete review of this literature.

\subsection{Pauli-Based Computation}
\label{sec:PBC}
When combined with injected magic states, logical measurements across bridged processing blocks support universal quantum computation on the encoded logical qubits using Pauli-based computation~\cite{bravyi_trading_2016}.
Indeed, a quantum circuit on $\kappa$ qubits with a $T$ count of $\tau$ and any number of Clifford gates can be performed using $\tau+\kappa$ Pauli measurements and one $\ket{T}$ state for each of the first $\tau$ measurements~\cite{litinski_game_2019}.
More generally, if the circuit also contains $o$ intermediate Pauli measurements on which later operations adaptively depend, then the circuit can be performed using $\tau+\kappa+o$ Pauli measurements without any additional $\ket{T}$ states.
This implies that such a circuit can be performed fault-tolerantly using $\lceil \kappa/k \rceil$ $\llbracket n,k,d\rrbracket$ bridged processing blocks in $\tau +\kappa+o$ logical cycles.

The compilation which allows for this implementation follows that presented in Ref.~\cite{litinski_game_2019}.
First, each $T$ gate is replaced by a magic state injection circuit, which requires one Pauli measurement with support on the processing block.
Then, all Clifford gates are commuted through to the end of the circuit and absorbed into the final measurement of each of the $\kappa$ qubits.
This is done using the rule that the Pauli defining the basis of each measurement is transformed by conjugation by each Clifford that either passes through it or (in the case of the final measurements) that it absorbs.
By definition, Clifford gates map Pauli operators to Pauli operators under conjugation, so the resulting circuit consists of $\tau+\kappa+o$ Pauli measurements, along with a $\ket{T}$ state for each $T$ gate, as required.

\subsection{Relevant Timescales}
There are three relevant timescales that we use in the determination of the runtime of a fault-tolerant quantum circuit.
First, there is the code cycle time $t_c$, which is the time required for the completion of one code cycle (i.e., one round of syndrome extraction).
This time is hardware-dependent, with typical estimates for different platforms ranging from 1 \textmu s to 1 ms~\cite{acharya_quantum_2025,bombin_interleaving_2021,zhou_resource_2025,pataki_compiling_2025,leone_resource_2025}.
Second, there is the logical cycle time $t_l$, which is the time required for one logical cycle.
Since a logical cycle consists of $d_t$ code cycles, this is related to the code cycle time by $t_l=d_tt_c$.

Finally, there is the \textit{reaction time} $t_r$~\cite{gidney_how_2021}.
This is defined as the minimum time between the start of one logical measurement, $M$ and the start of any subsequent measurement whose basis depends adaptively on the outcome of $M$.
The reaction time is dependent on the classical control system of the quantum hardware.
For simplicity, we assume throughout that the reaction time is equal to ten times the code cycle time, i.e., $t_r=10t_c$.
Since \mbox{$t_c\geq 1$ \textmu s} for all resource estimates we present, this implies that the reaction time always exceeds the conventionally-assumed minimum value of $10$ \textmu s~\cite{gidney_how_2021}.
With this reaction time, our architecture is not reaction-limited provided that $d_t\geq 10$ for all code blocks which is true for all resource estimates we present, with one exception addressed that is noted and addressed in \cref{sec:gb-me}.

\section{The Pinnacle Architecture}
\label{sec:architecture}
In this section we present the \textit{Pinnacle Architecture}, a low-overhead, modular and parallelisable quantum computing architecture based on QLDPC codes.
We begin by presenting the modules that constitute the architecture---processing units, magic engines and memory---and then describe the overall operation, structure, and scalability of the architecture.

\subsection{Modules}
\subsubsection{Processing Units}
The primary modules of the architecture are \textit{processing units}.
A processing unit uses $\beta$ processing blocks of an $\llbracket n,k,d\rrbracket$ QLDPC code to allow fault-tolerant quantum computation on $\kappa:=\beta k$ logical qubits.
These processing blocks can be arranged in a line with bridges connecting nearest-neighbour blocks.
This allows for the measurement of an arbitrary logical Pauli operator, supported on any or all of the logical qubits in the processing unit, in each logical cycle.

\subsubsection{Magic Engines}
\label{sec:me}
To allow for universal computation, we introduce a new module, which we refer to as a \textit{magic engine}.
A magic engine allows magic states to simultaneously be produced and consumed to provide a continuous throughput of magic states to an associated processing unit.

Specifically, each processing unit is equipped with one magic engine. 
The magic engine produces one encoded $\ket{\bar{T}}$ magic state in each logical cycle, and is also bridged to its associated processing unit to allow a joint measurement with that unit in parallel with state production. 
This is intended to ensure that in each logical cycle there is a magic state available for the processing unit to consume (i.e., the state produced in the previous logical cycle). 

A magic engine can be constructed from an $\llbracket n_e,k_e,d_e \rrbracket$ QLDPC code block as follows.
Partition the logical qubits of the code block into two halves, which we label the left ($L$) and right ($R$) logical sectors.
In odd-numbered logical cycles, a magic state is produced by performing a magic state distillation circuit consisting of a sequence of magic state injections.
These are implemented by joint logical measurements on the logical qubits of the $L$ logical sector and small ancilla systems that hold noisy $\ket{T}$ states.
The result is that the first logical qubit of the $L$ logical sector is in an encoded $\ket{\bar{T}}$ state at the end of the logical cycle.
In parallel, a magic state is consumed from logical sector R by performing a logical measurement of an arbitrary logical operator on the processing unit joint with $\bar{Z}$ on the first logical qubit of $R$.
This allows for the injection of the $\ket{\bar{T}}$ state to perform arbitrary $\pi/8$ Pauli rotations on the processing unit.
To complete the injection, an $\bar{X}$ measurement is also required on the first logical qubit of $R$; a change of basis can be performed between logical cycles to allow that logical qubit to be offline while it is performed in parallel with the continued operation of the engine.
In even-numbered logical cycles, the roles of the two logical sectors are swapped. 

Since magic state distillation relies on post-selection, for each logical cycle there is some probability $p_r$ that the state produced by the magic engine is rejected.
When this happens---and if a magic state is required for the next logical cycle---the processing unit can be left idle for the next logical cycle to allow for a new magic state to be prepared.
This causes the expected number of logical cycles required per $T$ gate to increase from $1$ to $\alpha=(1-p_r)^{-1}$.
The distillation protocol should be chosen such that $p_r$ is small to ensure that this effect is also small.

\begin{figure}[]
    \centering
    \includegraphics[width=\linewidth]{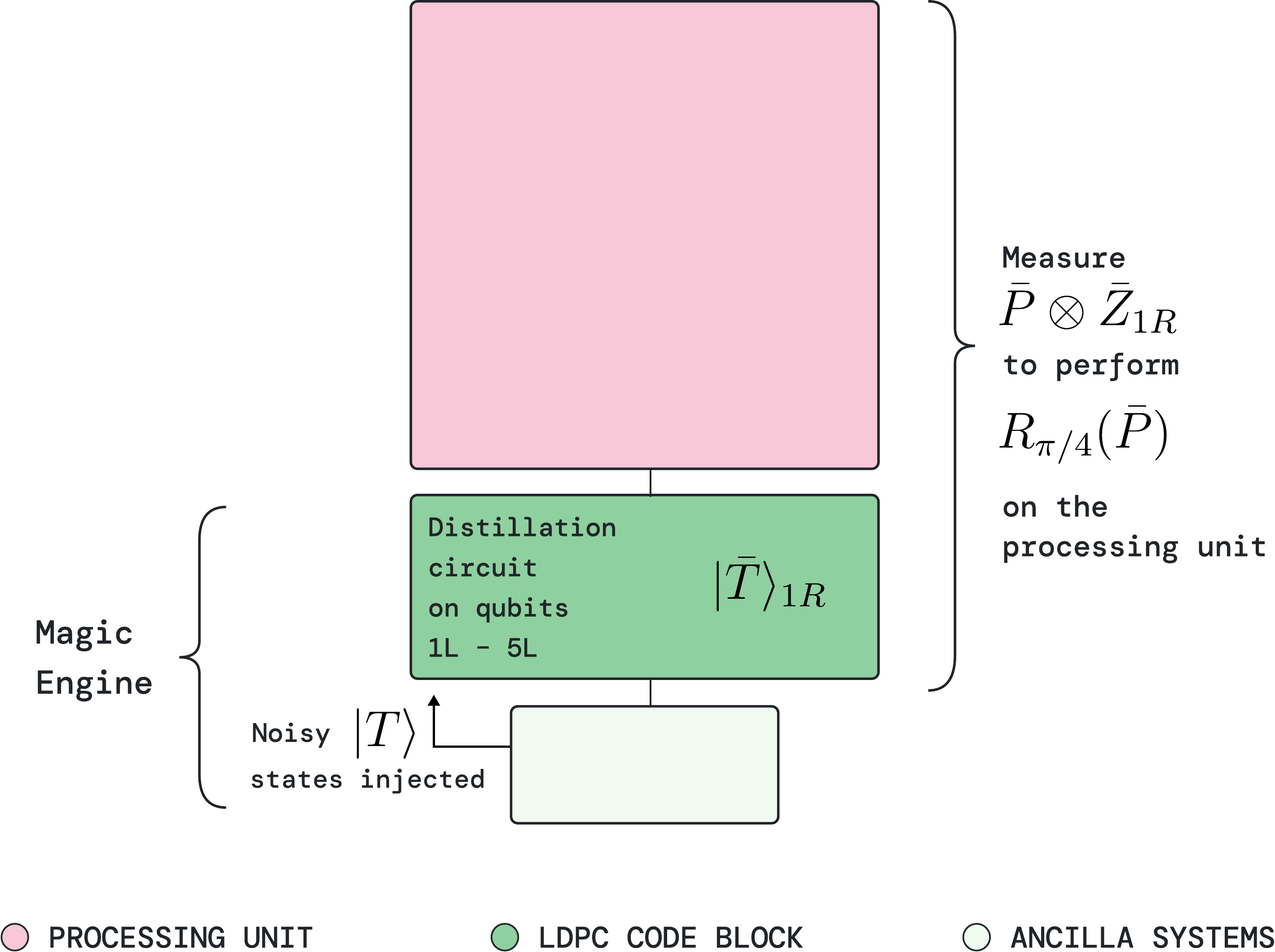}
    \caption{Structure and operation of a magic engine.
    Magic state distillation is applied on one logical sector of a QLDPC code block using noisy $\ket{T}$ states injected from ancillary systems.
    In parallel, an arbitrary Pauli measurement on the processing unit joint with $\bar{Z}_1$ on the other logical sector injects an encoded $\ket{\bar{T}}$ state that was distilled in the previous logical cycle.
    \label{fig:magic-engine}}
\end{figure}

\subsubsection{Memory}
\label{sec:memory}
The architecture can also include memory.
This is optional, but it is useful in cases where a large number of logical qubits must be stored but not processed.
It consists of $\nu$ code blocks of an $\llbracket n_m,k_m,d_m\rrbracket$ quantum error-correcting code encoding $\mu=\nu k_m$ logical qubits. 
Since these logical qubits are not processed, full processing blocks are not required.
However, to facilitate rearrangement of the memory, we ensure that each code block has $n_m$ ancilla qubits (including those used for syndrome extraction).
This means that the memory consists of a total of $2\nu n_m$ physical qubits.

Memory is accessed by processing units via \textit{ports}.
To facilitate this, we partition the logical qubits of the memory into sets of size $w$, which we refer to as \textit{windows}.
For simplicity, we enforce the condition that each window is contained within a single code block; this implies that $w$ divides $k_m$ such that there are a total of $\nu k_m/w$ windows.
Each port is associated with one window and consists of a gadget that allows for arbitrary logical $Z$-type measurements on the $w$ logical qubits of that window.
Bridging a processing unit to a port then allows for any circuit with arbitrary gates on the processing unit and controls on the $w$ logical qubits of the memory window to be implemented (via Pauli-based computation).
This is sufficient to allow read-only access to that window of memory by the processing unit~\cite{babbush_encoding_2018}.
More generally, we can assign a port to any subset of the windows of the memory.
Provided the gadgets constituting each of these ports allow for arbitrary logical $Z$ measurements to be performed in parallel, this can allow for up to $\nu k_m/w$ processing units to access the memory in parallel.

For each processing unit to access the full memory, the memory code blocks must be permuted.
To ensure that this does not require arbitrary routing, we impose the constraint that each code block is to be shifted by at most one position per logical cycle.
We may then perform the required permutation without requiring connectivity on a longer scale than the size of a code block as follows.
The $\nu$ memory code blocks are arranged such that the $i$th and $(i+1)$th (mod $\nu$) code blocks are adjacent for $1\leq i\leq \nu$ (e.g., in a loop).
A cyclic shift of memory code blocks is performed by physical SWAPs of the $j$th data qubit in code block $i$ with the $j$th ancilla qubit in code block $i+1$ (mod $\nu$) for $1\leq i\leq \nu$ and $1\leq j\leq n$, followed by a (local) SWAP of the $j$th data qubit and $j$th ancilla qubit in each code block.
Since such a circuit requires only one layer of non-local gates confined to the scale of a code block, whereas a QLDPC syndrome extraction circuit generally requires many such gates, we assume that this rearrangement can be completed during a code cycle and so its time cost is negligible.
By applying $\nu$ of these cyclic shifts over a period of at least $\nu$ logical cycles, every window of memory can be accessed by each processing unit.

\subsection{Operation}
We now consider how the architecture operates while performing a quantum computation.
To aid understanding, we start with a simplified baseline operation (\cref{sec:baseline-operation}) and incrementally build up to the most general operation (\cref{sec:general-operation}).

\subsubsection{Serial Operation}
\label{sec:baseline-operation}
As a baseline, let us first consider a serial mode of operation.
In this mode, there is a single processing unit with $\kappa$ logical qubits (and, for simplicity, we assume there is no memory).
During each logical cycle, a joint logical Pauli measurement is performed on the processing unit and magic engine.
In parallel, the magic engine produces a magic state for the next logical cycle.
Accounting for a magic engine reject rate of $p_r$, this allows for an arbitrary Clifford+T circuit on $\kappa$ qubits with a $T$ count of $\tau$ and $o$ intermediate measurements to be performed fault-tolerantly in an average of $\tau/(1-p_r)+\kappa+o$ logical cycles.

\subsubsection{Fully Parallel Operation}
As a next step, we can consider the case of implementing a circuit that can be completely separated out into two or more independent circuits.
In this context, we can separate the architecture up into a separate processing unit for each independent circuit and perform all the circuits in parallel.
This reduces the number of logical timesteps required from $\tau/(1-p_r)+\kappa+o$ to approximately $\max_{i}{\left(\tau_i/(1-p_r)+\kappa_i+o_i\right)}$ where $\tau_i$, $\kappa_i$ and $o_i$ denote the number of $T$ gates, logical qubits and intermediate logical measurements in the $i$th independent circuit. 
This expression omits an $O\left(\sqrt{\max_{i}{(\tau_i)}}\right)$ correction arising from variance in the proportion of magic states rejected in different processing units across the duration of the circuit.
Since we are interested in circuits where the $T$ count is large, we assume this correction is negligible.

An example of where this mode could be used is in implementing multiple shots of an algorithm in parallel.
In this context, it can be considered a way to use a greater number of qubits to reduce the runtime compared to when all shots are performed in series.

\subsubsection{Flexibly Parallel Operation}
\label{sec:parallel-operation}
More common and general is the case where a circuit can be implemented partially in parallel.
In such a circuit, no subset of logical qubits is entirely separable from the rest, but significant parts of the circuit involve operations on disjoint registers of logical qubits.
A conventional circuit implementation would allow such parts of the circuit to be performed in parallel on the disjoint registers.
We now show how this can be done in the Pinnacle Architecture.

Parallelism of this kind is inherently challenging with Pauli-based computation because of the effect of commuting Clifford gates through to the end.
To see this, consider the case of a \textit{Clifford frame} at a given point in the circuit (i.e., the product of Clifford gates up to that point) that corresponds to an entangling gate between two processing units.
Then, as Cliffords are commuted through the circuit the supports of the logical measurements on one processing unit will spread out to straddle both.
In particular, this means that a logical measurement corresponding to a $\ket{\bar{T}}$ state injection---required to perform a $\bar{T}$ gate on either processing unit---comes to have support on both units.
Since each processing unit only allows one logical measurement per logical cycle, this implies that a $\bar{T}$ gate on a logical qubit on one processing unit cannot be performed in parallel with a $\bar{T}$ gate acting on a logical qubit of the other processing unit.

Parallelism therefore requires that the Clifford frame acts as a tensor product across the units which are to be parallelised.
One way to achieve this (following Ref.~\cite{he_extractors_2025}) could be to perform $\overline{\text{CNOT}}$ gates that entangle processing units physically using additional logical measurements, so that they can be excluded from the Clifford frame.
However, this approach leads to a time cost that scales with the number of $\overline{\text{CNOT}}$ gates, which can quickly cause the benefits of parallelism to be erased.
In particular, it performs poorly in the common setting where one part of the circuit is highly parallelisable but another is not, since the cost of parallelising the former part scales with the number of entangling gates in the latter part.

We instead propose a more flexible alternative that allows for parallelism when it is beneficial but avoids the cost of physically implementing every $\overline{\text{CNOT}}$ gate.
To develop this approach, we use the technique of \textit{Clifford frame cleaning}.
Precisely, let $\mathbf{K}$ be a set of logical qubits and $\mathbf{K}'\subset \mathbf{K}$ be a subset of $\left| \mathbf{K}'\right|$ of these logical qubits.
If $C$ is a Clifford frame acting on $\mathbf{K}$, cleaning $C$ off $\mathbf{K}'$ means physically performing a Clifford $U$ such that $CU$ acts trivially on $\mathbf{K}'$.
We show in \cref{lem:cleaning1} that this can be done using at most $4\left|\mathbf{K}'\right|$ logical Pauli product measurements on $\mathbf{K}$. 
We note that an instance of Clifford frame cleaning was previously introduced in Ref.~\cite{chamberland_universal_2022} for the specific purpose of caching in surface code architectures; the construction we present is more generally applicable to arbitrary generalised surgery architectures.

Harnessing this tool, our flexible parallelism framework is as follows.
We consider processing units to automatically be \textit{joined} into larger units at any point in the circuit where there is an entangling gate between the units.
From that point on, $\bar{T}$ gates and logical measurements on any of the constituent processing units of this joined unit are assumed to require joint logical Pauli product measurements across the entire joined unit, meaning that only one such gate can be implemented on the joined unit per logical timestep.
At any later time, we can then \textit{separate} a processing unit (with $\kappa$ logical qubits) from a joined unit by cleaning the Clifford frame off the joined unit, at a cost of at most $4\kappa$ additional logical timesteps.
From then on, logical measurements on the separated unit can again be performed in parallel with the unit it was separated from.
This process is shown in \cref{fig:cleaning}.

\begin{figure}
\centering
\subfloat[
Two separate processing units initially operate in parallel.
\label{cleaning-a}
]{
    \includegraphics[scale=0.25]{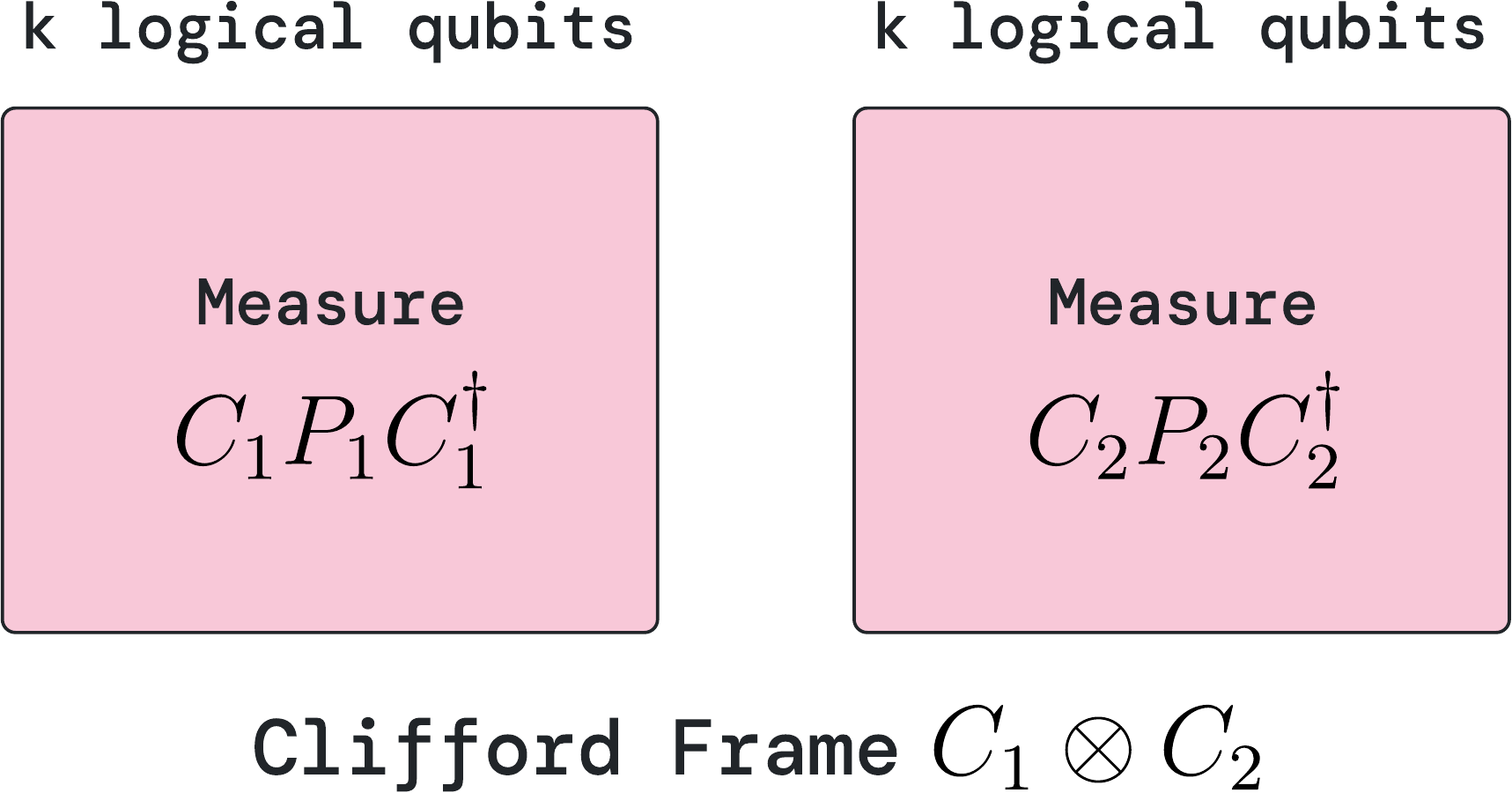}
}
\vspace{0.3cm}
\subfloat[
A point in the circuit is reached where a $\overline{\text{CNOT}}$ straddles the two processing units, causing the Clifford frame to become entangling across the units. 
This \textit{joins} the units, requiring all subsequent logical measurements on either unit to be performed serially.
\label{cleaning-b}
]{
    \includegraphics[scale=0.25]{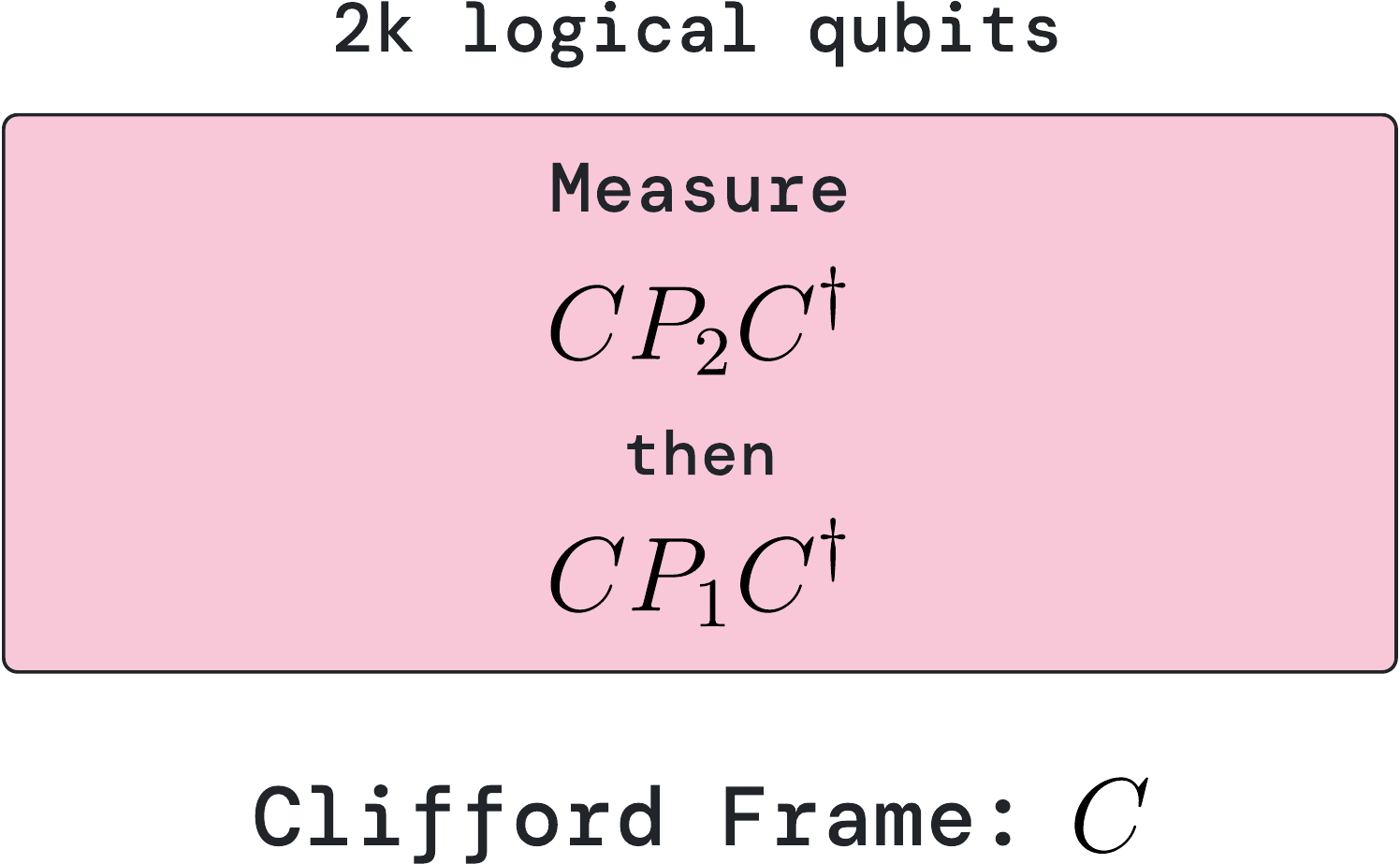}
}
\vspace{0.35cm}
\subfloat[
At any later point, the Clifford frame can be \textit{cleaned} by performing up to $4k$ additional logical measurements on the joined processing units.
\label{cleaning-c}
]{
    \includegraphics[scale=0.25]{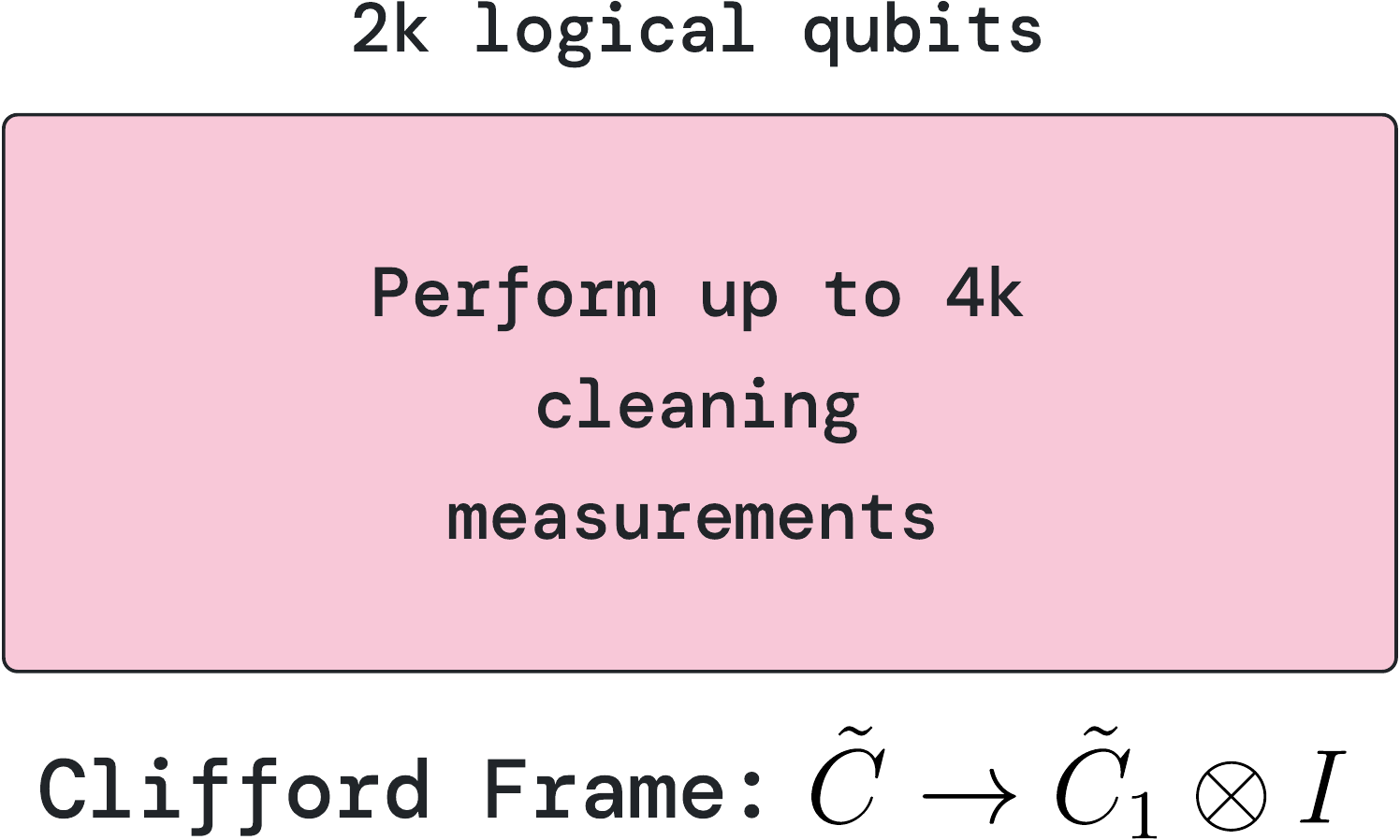}
}
\vspace{0.35cm}
\subfloat[
After Clifford frame cleaning is completed, the two processing units are separated again, allowing parallel measurements to resume.
\label{cleaning-d}
]{
    \includegraphics[scale=0.25]{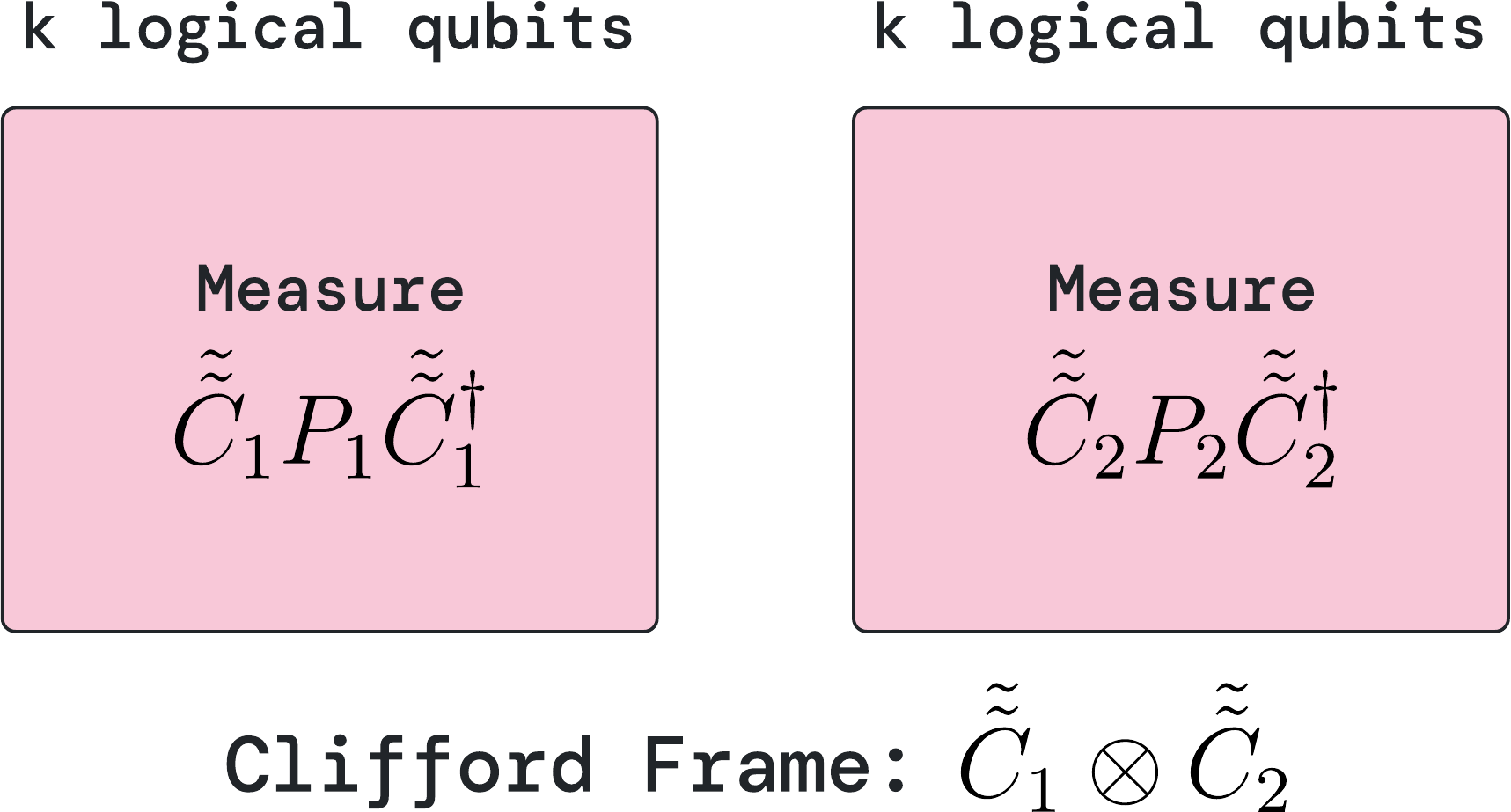}
}

\caption{Process of joining and separating processing units with Clifford frame cleaning to allow for flexible parallelism.}
\label{fig:cleaning}
\end{figure}

This framework allows for processing units to be joined during parts of a circuit in which many inter-unit gates occur, but then to be separated again (at a relatively small cost) for parts of the circuit that are more amenable to parallelisation.
The choice of if and when to separate units can be specifically optimised for compilation of any particular circuit, with the potential for significant time savings compared to either a fully serial approach or an approach that depends on physical implementation of all inter-unit entangling gates.

\subsubsection{General Operation}
\label{sec:general-operation}
The final step to our fully general operation is to optionally incorporate the memory.
Recall that each processing unit accesses memory via a port.
We allow for read-only memory access, which requires only gates that act as a control on the port and a target on the processing unit \cite{babbush_encoding_2018,gidney_how_2025}.
This means that the access can be provided by using logical CNOT gates with controls on the logical qubits in a port and targets on ancillary logical qubits in the processing unit to fan out memory data onto the processing unit at the start of the access and fan in at the end of the access.
Since such operations commute, arbitrarily many processing units can access the memory in parallel, provided they each have ports with measurement gadgets that can be used in parallel.

Implementing memory access on the Pinnacle Architecture uses the same concepts of joining and separating units presented in \cref{sec:parallel-operation}.
Specifically, when a processing unit accesses a window of the memory, the port associated with that window is joined onto the processing unit by the logical CNOTs which implement the fan-out.
Since no entangling gates act within the memory, these ports can always be assumed to be separate from one another.
Moreover, as all gates between a port and processing unit act as controls on the port, commuting through the Clifford frame only gives rise to $Z$-type logical measurements on the port.
When the access is finished, the Clifford frame is cleaned off the port such that the port can be separated from the processing unit again.
As shown in \cref{lem:cleaning2}, this requires only $2w$ logical cycles for a port of $w$ qubits.
This ensures that subsequent logical measurements on the processing unit act trivially on the memory logical qubits, enabling subsequent access by different processing units.

\subsection{Scalability}
\label{sec:features}
The architecture is designed to ensure that its operation remains feasible at large scale.
Specifically, because the processing units are assembled from individual processing blocks that are all connected via bridges between nearest neighbours, logical operators with support across any subset of processing blocks in a processing unit can be measured using connections between physical qubits that are restricted to the scale of one processing block.
This means that arbitrarily large processing units supporting arbitrarily many logical qubits can be realised with physical connections of constant scale.
For example, in a two-dimensional arrangement of qubits, this scale is approximately the square-root of the size of the processing block $\sqrt{n_{pb}}$.

This means that the architecture can be supported even on hardware platforms that only support interactions whose fidelity decreases continuously with the interaction distance.
Moreover, it means that no routing of logical qubits across the architecture is required.
Instead, all changes between logical cycles required to support different logical operations are confined to the scale of a processing block.
As discussed in \cref{sec:gb-codes}, code choice and gadget design can minimise the dynamism required even on this scale to be very limited.
In summary, the significant overhead reductions supported by QLDPC codes can be realised with interactions and rewiring confined to a fixed scale, which can be chosen to be consistent with a given hardware platform.

The further modularisation of the architecture into processing units provides an additional benefit. 
Specifically, feasibility limitations are expected to require large-scale quantum computers on many hardware platforms to be assembled from smaller modules~\cite{mohseni_how_2025, filippov_architecting_2025, yoder_tour_2025}.
This opens up an opportunity for hardware and architecture co-design that can be realised by associating these hardware modules with processing units of the Pinnacle Architecture.
This is beneficial because it aligns compilation imperatives with hardware constraints.
Gates between hardware modules should be minimised since they are likely to have poorer performance than intra-module gates, while gates between processing units should also be minimised to maximise parallelism.

\subsection{Structure}
The only constraint on the assembly of the architecture is that modules that are joined together at any time must be connected in the architecture. 
This ensures that these modules can be bridged, and therefore logical measurements are possible across them, without requiring long-distance transport.
In particular, this means that each processing unit must be adjacent to its associated magic engine and---if a memory is present---a port of memory. 
If a set of processing units is to be joined, these processing units should also be adjacent to each other.

The structure of the architecture for different example instantiations is shown in \cref{fig:architecture}.

\section{Instantiation of the Architecture}
\label{sec:instantiation}
In this section, we present a specific instantiation of the Pinnacle Architecture using a family of generalised bicycle (GB) codes, along with numerical simulation results used to determine the logical error rates that can be achieved for different of code distances and physical error rates.

\subsection{Setup}
\label{sec:gb-codes}
GB codes are defined by a lift $l\in\mathbb{N}$ and sets $A,B\subseteq\mathbb{Z}_l$~\cite{kovalev_quantum_2013}.
They have $n=2l$ physical qubits which can be divided into two sectors of $l$ physical qubits each, which we label $L$ and $R$.
The parity check operators of the code are then
\begin{align}
    S_{X,j} &=\prod_{a\in A} X_{(j+a),L} \prod_{b\in B} X_{(j+b),R}, \label{eq:gb-sx}\\
    S_{Z,j} &=\prod_{a\in A} Z_{(j-a),R} \prod_{b\in B} Z_{(j-b),L}. \label{eq:gb-sz}
\end{align}
Here, the first subscript on each operator denotes the position of a qubit within the sector and the second denotes the sector. 
For any $\sigma\in \mathbb{Z}_l$, a cyclic shift of all physical qubits by $\sigma$ sites preserves the group of parity check operators, making it a qubit automorphism.

We construct code blocks from the specific family of GB codes presented in Ref.~\cite{webster_explicit_2025}, which were first discovered in Ref.~\cite{panteleev_degenerate_2021} and subsequently explored in Refs.~\cite{crest_stabilizer_2023, lin_singleshot_2025, wang_coprime_2026}.
These codes have weight-six parity check operators and require only simple, relatively short-distance transport patterns for syndrome extraction.
The code family is parameterised by an integer $m>3$, and defined by choosing the lift to be $l=2^m-1$, as well as the sets $A$ and $B$ such that the polynomials $A(x) = \sum_{a \in A} x^a$ and $B(x) = \sum_{b \in B} x^b$ generate the parity check matrices of the classical simplex codes. 
Classical simplex codes have parameters $[2^m-1, m, 2^{m-1}]$~\cite{macwilliams_theory_1977}.
We conjecture that there exist choices of $A$ and $B$ such that the GB codes constructed in this way have parameters $\llbracket 2(2^m-1), 2m, m+(m-4)^2 \rrbracket$.
We present explicit instances for the first five codes in the family in \cref{tab:codes}.

For these codes, we empirically find that performance is improved by allowing for slightly more rounds of syndrome extraction than the code distance.
Guided by this observation, we choose to use $d_t=d+2$ code cycles per logical cycle.

The code blocks are extended to processing blocks by supplementing them with the gadget system presented in Ref.~\cite{webster_explicit_2025}.
This gadget system consists of four gadgets which correspond to four seed operators chosen such that all logical Pauli operators are products of cyclic shifts of the seed operators.
In particular, the $k$ logical qubits of each codes naturally divide into two logical sectors ($L$ and $R$) with $k/2$ logical qubits in each such that one $X$-type and one $Z$-type seed operator suffice for each logical sector.
Constructing, bridging, and shifting four gadgets capable of measuring each of the seed operators therefore suffices to measure arbitrary logical Pauli operators on a code block.
This means that only minor alterations are required to allow the same syndrome extraction circuit to measure any logical Pauli operator.
Four bridges are also included per processing block; three are used to bridge the four gadgets within the block, while the fourth is used to bridge the last gadget of the block to the first gadget of the next block.

Therefore, letting $n_g$ be the number of physical qubits per gadget and $n_b$ be the number of physical qubits per bridge, and accounting for $n_c=n$ check qubits, the total number of physical qubits per processing block is given by
\begin{equation}
    n_{pb}=n_{cb}+4n_g+4n_b.
\end{equation}
The values of these parameters are provided in \cref{tab:codes}.

\begin{table*}[]
    \caption{Instances of the family of generalised bicycle codes.
    For each code, the parameters $\llbracket n,k,d\rrbracket$ are provided, along with the number of code cycles per logical cycle, $d_t=d+2$.
    The codes are defined by $l$, $A$, and $B$, alongside \cref{eq:gb-sx} and \cref{eq:gb-sz}.
    The remaining columns show the number of physical qubits in their code blocks, gadgets, bridges and processing blocks.
    See Ref.~\cite{webster_explicit_2025} for more details.}
    \label{tab:codes}
    \begingroup
    \renewcommand{\arraystretch}{1.1}
    \begin{tabular}{| c | c | c | c | c |c | c | c | c |}
        \hline
        $\llbracket n,k,d\rrbracket$ & $d_t$ & $l$ & $A$ & $B$ & Code Block Qubits & Gadget Qubits & Bridge Qubits & Processing Block Qubits\\
        & $(d+2)$ & & & & ($n_{cb}=2n$) & ($n_g$) & ($n_b$) & ($n_{pb}=n_{cb}+4n_g+4n_b$)\\
        \hline
        $\llbracket 30,8,4\rrbracket$ & $6$ & $15$ & $\{0,6,13\}$ & $\{0,1,4\}$  & $60$ & $13$ & $7$ & $140$ \\
        \hline
        $\llbracket 62,10,6\rrbracket$ & $8$ & $31$ & $\{0,6,15\}$ & $\{0,5,7\}$ &  $124$ & $19$ & $11$ & $244$\\
        \hline
        $\llbracket 126,12,10\rrbracket$ & $12$ & $63$ & $\{0,4,37\}$ & $\{0,29,49\}$ & $252$ & $31$ & $19$ & $452$\\
        \hline
        $\llbracket 254,14,16\rrbracket$ & $18$ & $127$ & $\{0,32,100\}$ & $\{0,28,49\}$ & $508$ & $57$ & $31$ & $860$\\
        \hline
        $\llbracket 510,16,24\rrbracket$ & $26$ & $255$ & $\{0,39,55\}$ & $\{0,70,127\}$ & $1020$ & $99$ & $51$ & $1620$\\
        \hline
    \end{tabular}
    \endgroup
\end{table*}

These gadgets can also be used to measure certain sets of logical operators in parallel (i.e., in a single logical cycle) via the inclusion of duplicate gadgets.
Specifically, a set of $m$ logical operators $P_1,\ldots P_m$ can be measured in parallel by using a different copy of the gadget to measure each, provided they commute on every physical qubit. 
This qubit-wise commutation condition ensures that the check operators from different gadgets all commute.
Connecting $m$ gadgets to the same code block can increase code check weights and qubit degrees in places where multiple gadgets are joined to the same qubit or check.
In theory, this increase can be by up to $m$ but, with in an appropriately chosen basis, it is typically significantly less than this.
If necessary, a small number of ancillary qubits can also be used to reduce check weights and/or qubit degrees~\cite{williamson_lowoverhead_2024, cowtan_parallel_2026}.

\subsection{Modules}
\label{sec:instantiation-components}

\subsubsection{Processing Units}
Using $\beta$ processing blocks constructed from the GB code family introduced above (for any $\beta\in\mathbb{N}$), we can encode $\kappa=\beta k$ logical qubits in $\beta n_{pb}$ physical qubits.
Specifically, with a code distance of $d=16$, we can encode $14\beta$ logical qubits in $860\beta$ physical qubits.
For better protection, we can instead use a code distance of $d=24$ and encode $16\beta$ logical qubits in $1620\beta$ physical qubits.

\subsubsection{Magic Engines}
\label{sec:gb-me}
We construct each magic engine from code blocks of the same GB code family as those used for the processing blocks.
These blocks naturally have the required $L$ and $R$ logical sectors, with $k/2> 5$ logical qubits in each sector when $d\geq 10$.

We use 15-to-1 magic state distillation on these code blocks to produce encoded $\ket{\bar{T}}$ magic states~\cite{bravyi_universal_2005}. 
Following Ref.~\cite{litinski_magic_2019}, this can be done using fifteen $Z$-type $\pi/8$ rotations, followed by four logical measurements used for post-selection.
Each of the rotations is implemented by injecting a noisy $\ket{T}$ state from a small ancillary $\llbracket n_a,1,d_a\rrbracket$ code.
This is done using generalised surgery to perform a joint measurement (repeated $d_a$ times) between the $L$ logical sector of the GB code and the ancillary code.
The noisy $\ket{T}$ states are prepared using standard techniques (including, if necessary, methods to increase the input state fidelity, such as zero-level distillation~\cite{goto_minimizing_2016, itogawa_efficient_2025} or magic state cultivation~\cite{gidney_magic_2024, sahay_foldtransversal_2025}).
Four post-selection measurements (each repeated $r$ times) are then performed in series on the $L$ logical sector, and the state is rejected if any round of any of these measurements gives a $-1$ result.

While in principle the fifteen rotations could be implemented in parallel, in practice it is preferable to split them into two batches so that at most eight states are injected in parallel.
Accounting also for the measurements required for injecting the high-fidelity state into the processing unit, this means that a maximum of ten logical measurements are required in parallel.
Since these operators are fixed, a basis can be chosen that limits their overlap to ensure that these measurements do not result in a large increase in qubit degrees of check weights.

The physical qubits that must be accounted for in this protocol are GB code block itself ($n_{cb}$), gadgets used to perform up to ten logical measurements in parallel ($10n_g$), fifteen ancillary $\llbracket n_a,1,d_a\rrbracket$ codes ($15\times (2n_a-1)$) with bridges used to connect these ancillary codes to the GB code block ($15\times (2d-1$)) and (if necessary) ancillary qubits ($n_\alpha$) used to reduce the infidelity of input states (e.g., by magic state cultivation).
The total overhead of the magic engine is therefore given by
\begin{equation}\label{eq:magic-engine}
n_{me}=n_{cb}+10n_g+30(n_a+d_a-1)+n_\alpha.
\end{equation}

The parameters of the engine are chosen to achieve a target output infidelity, $p_{\text{out}}$.
Specifically, we choose the distance of the GB code block, $d_e$, to be large enough that the logical error rate of this block is much smaller than $p_{\text{out}}$, which ensures that the contribution to the output infidelity from errors on this code block is negligible.
The probability of an error in one of the $Z$-type rotations can then be approximated by $p_{\text{rot}}\approx p_{\text{in}}+(d_a+1)p_a$, where $p_{\text{in}}$ is the error of the noisy $|T\rangle$ state, $p_a$ is the logical error rate per code cycle of the ancillary code and the coefficient $d_a+1$ accounts for $d_a$ rounds of joint measurements with the GB block and one destructive measurement on the ancillary code.
The probability of an undetected error in the post-selection measurements is $p_m=p^r$.
In the 15-to-1 distillation scheme, there are thirty-five failure modes arising from errors in $Z$-type rotations \cite{litinski_magic_2019}, and a further six resulting from a rotation error combined with two errors in the post-selection measurements.
The infidelity of the output state can therefore be approximated by
\begin{align}
p_{\text{out}}&\approx 35p_{\text{rot}}^3+6p_{\text{rot}}p_m^2 \nonumber\\
&\approx 35\left(p_{\text{in}}+(d_a+1)p_a\right)^3+6p^r\left(p_{\text{in}}+(d_a+1)p_a\right) \label{eq:engine-condition}
\end{align}
The reject rate is approximately
\begin{equation}
p_r \approx 15p_{\text{rot}}+4rp_m+4p_\alpha
\end{equation}
where $p_\alpha$ is the probability that preparation of sufficiently many noisy $|T\rangle$ states for the protocol fails (e.g., because of post-selection in magic state cultivation).

Motivated by the applications of \cref{sec:applications}, we consider two target output infidelities: $p_{\text{out}}\leq 10^{-9}$ (suitable for the Fermi-Hubbard model) and $p_{\text{out}}\leq 10^{-11}$ (suitable for RSA-2048 factoring).
We also consider two physical error rates, $p=10^{-4}$ and $p=10^{-3}$; motivated by the results of \cref{sec:performance} we choose $d_e=10$ and $d_e=24$ respectively for these error rates. 
In the case where $p_{\text{out}}= 10^{-9}$ and $p=10^{-4}$, it is sufficient to use a physical $|T\rangle$ state (i.e., to choose the trivial code where $n_a=d_a=1$) and $r=1$.
Indeed, in this case $p_{\text{in}}=p_a=p=10^{-4}$ so that $p_{\text{out}}\approx 41(3\times 10^{-4})^3=10^{-9}$.
The number of qubits required by the magic engine is then
\begin{equation}
n_{\text{me; $10^{-4} \to 10^{-9}$}}=592.
\end{equation}
The reject rate is given by $p_r\approx 19p\approx 0.2\%$.

For all other cases, we use a version of fold-transversal cultivation \cite{sahay_foldtransversal_2025}.
Specifically, in each case we perform the injection and cultivation steps in one or more ancillary systems of 25 qubits per injected state, which each produce a magic state in no more than eight code cycles.
These states can then be grown into fifteen distance-$d_a$ rotated surface codes of required distance $d_a$ and post-selected to have an infidelity of $p_{\text{in}}$ before injection into the QLDPC code block.
The magic engine reject rate is dominated by this post-selection in cultivation (i.e., $p_r\approx p_\alpha$) \footnote{We estimate these reject rates using data on expected attempts (which we denote $\gamma$) from Ref.~\cite{sahay_foldtransversal_2025}, extrapolating under the assumption that the failure rate is linear in $p$ for the case of $p=10^{-4}$.
Specifically, the overall reject rate is then given by $p_r=(1-(1-1/\gamma)^\xi)^{15}$, where $\xi$ is the number of attempts that are possible in the available time and with the throughput provided by the ancilla systems ($\xi=2$ for the $p=10^{-4}$ case and $\xi=5$ for the $p=10^{-3}$ cases).}.
For $p=10^{-4}$ and $p_{\text{out}}=10^{-11}$, we choose $d_a=5$ (such that $p_a=5\times 10^{-7}$  \cite{orourke_compare_2025}), $p_{\text{in}}=5\times 10^{-5}$ and $r=1$, and we provision one ancilla system per injected state.
The number of physical qubits required is then:
\begin{equation}
n_{\text{me; $10^{-4} \to 10^{-11}$}}=1807.
\end{equation}
and the estimated reject rate is $p_r\approx 2\%$.
For $p=10^{-3}$ and $p_{\text{out}}=10^{-9}$, we choose $d_a=7$ (such that $p_a=2\times 10^{-5}$  \cite{orourke_compare_2025}), $p_{\text{in}}=10^{-4}$ and $r=2$, and we provision two ancilla systems per injected state.
The number of physical qubits required is then:
\begin{equation}
n_{\text{me; $10^{-3} \to 10^{-9}$}}=4410.
\end{equation}
For $p=10^{-3}$ and $p_{\text{out}}=10^{-11}$, we choose $d_a=9$ (such that $p_a=2\times 10^{-6}$  \cite{orourke_compare_2025}), $p_{\text{in}}=5\times 10^{-5}$ and $r=2$, and we provision two ancilla systems per injected state. 
The number of physical qubits required is then:
\begin{equation}
n_{\text{me; $10^{-3} \to 10^{-9}$}}=5430.
\end{equation}
For both $p=10^{-3}$ cases, the estimated reject rate is $p_r\approx 10\%$.

Since the engine is intended to produce a magic state for each logical cycle, the time, $t_{\text{me}}$, required for distillation places a lower bound on the logical cycle time of the associated processing unit.
The time required for the logical measurements in distillation themselves is $(2d_a+4r)t_c$.
However, in some cases this scheme can be reaction-limited, since the post-selection measurement bases depend on the outcome of the injection measurements.
Specifically, we must allow at least the reaction time of $t_r=10t_c$ from the beginning of the protocol to the first post-selection measurement and from the beginning of the second injection batch to the second post-selection measurement \footnote{This suffices since we can choose the eight logical measurements in the first batch to act trivially on one of the four logical qubits involved in post-selection.}.
This means the number of code cycles required is:
\begin{equation}
t_{me}=\max\{2d_at_c+4rt_c,t_r+4rt_c,d_at_c+t_r+3rt_c\}
\end{equation}
For the $p=10^{-3}$ and $p=10^{-4}$ magic engines, we find that $t_{me}\leq 26t_c$ and $t_{me}\leq 18t_c$ respectively, which ensures that distillation completes within $d=24$ and $d=16$ GB code logical cycles.
This ensures compatibility for the applications in \cref{sec:applications} without any increase in logical cycle time, with the sole exception of the Fermi-Hubbard model with $p=10^{-4}$.
In that case, the use of $d=10$ processing blocks with the $p_{\text{out}}=10^{-9}$ magic engine implies a modest increase in logical cycle time from $d_t=12t_c$ to $t_{me}=14t_c$ code cycles, which is accounted for in the presented runtimes.

\subsubsection{Memory}\label{sec:gb-memory}
For the memory, we use the same code blocks as are used for the processing blocks.
For simplicity, we match the window size with the number of logical qubits in a logical sector, $k/2$.
Each port then corresponds to one of the $Z$-type gadgets used in the gadget system of the processing blocks, along with a bridge to connect to a processing unit.
The logical operators that must be measured to access the memory commute on every physical qubit and so can be measured in parallel, as they act across disjoint processing units and act only as $Z$-type operators on the memory.
Moreover, since each memory code block has at most two ports, these measurements increase the check weight and qubit degree by at most two.

Referring to \cref{tab:codes}, the additional number of physical qubits per port is $n_g+n_b=88$ at $d=16$ or $n_g+n_b=150$ at $d=24$.
Hence, for any $\nu\in\mathbb{N}$, we can encode $14\nu$ logical qubits in memory such that $\rho$ processing units can access it in parallel with $508\nu+88\rho$ physical qubits at $d=16$ or $16\nu$ logical qubits in memory with $1020\nu+150\rho$ physical qubits at $d=24$.

\subsection{Simulation Results}
\label{sec:performance}
To assess the logical error rates achievable with different code choices, we perform numerical simulations of both memory and logical measurements by generalised surgery.
These are performed for one logical cycle ($d_t=d+2$ rounds of syndrome extraction) with standard circuit-level depolarising noise.
For memory experiments we instead simulated $d$ rounds of syndrome extraction and rescaled the logical failure rates by $(d+2)/d$.
Circuits for the memory simulations are constructed as in Ref.~\cite{lin_singleshot_2025}, while circuits for the surgery simulations are constructed using integer linear programming~\cite{vittal_flagproxy_2024}.
We perform uncorrelated (only $X$-type detectors) most-likely error decoding by converting the decoding problem into a mixed integer program and allow the solver to obtain an optimal solution.
Results of these simulations are shown in \cref{fig:ler}.

\begin{figure}[]
    \centering
    \includegraphics[width=\linewidth]{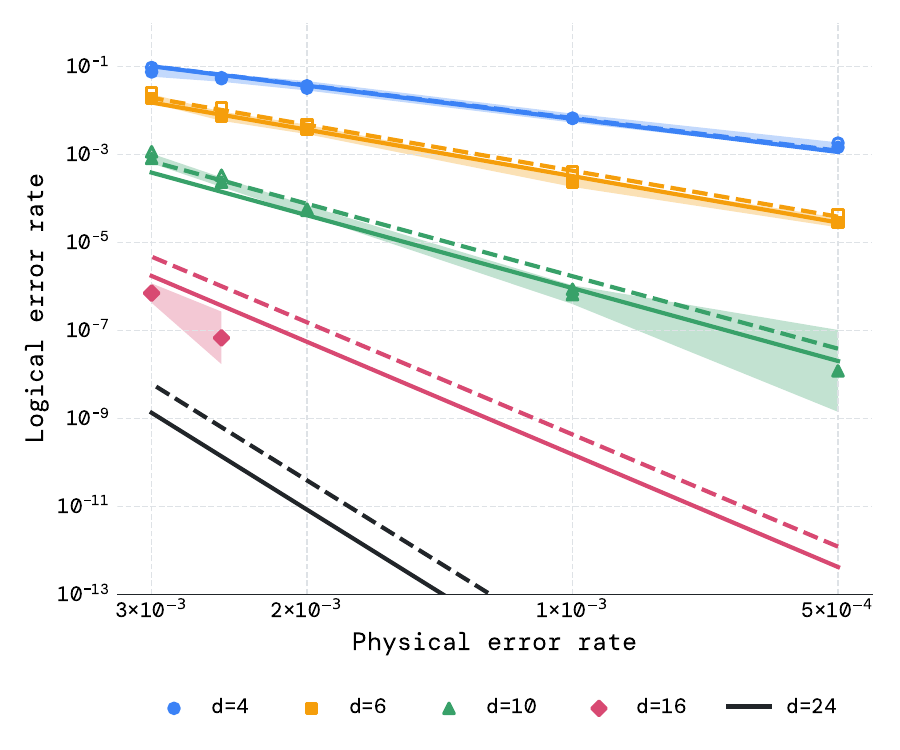}
    \caption{
    Simulation results to determine logical error rates per processing block (of $k$ logical qubits) per logical cycle of the Pinnacle Architecture using GB codes of different distances and across different physical error rates.
    Solid markers show results for $X$-basis memory experiments decoded with most-likely error decoding.
    Hollow markers show results for logical measurements by generalised surgery; data was collected for the $d=4$, $d=6$ and $d=10$ codes at the same physical error rates as for the memory experiments.
    Highlighted regions are $99\%$ confidence intervals for the memory experiment data points.
    The solid and dashed lines are fits of the ansatz in \cref{eq:ansatz} to the memory experiment data and logical measurement data respectively.
    We emphasise that ansatz parameters are independent of distance (i.e., the same equations are used for all codes), which allows for extrapolation of the collected data across the code family.
    }
    \label{fig:ler}
\end{figure}

We assume each point is binomially distributed and use maximum likelihood estimation to fit the points to the sub-threshold ansatz
\begin{equation}\label{eq:ansatz}
    p_{L,cb}(p, d) = A \left(\frac{p}{B}\right)^{\frac{d}{2} + C}. 
\end{equation}
Here $p_L$ is the total logical failure rate for all $k$ logical observables over $d_t$ rounds.
It follows from this that the error rate per logical observable and logical cycle is given by
\begin{equation}\label{eq:ansatz-per-qubit}
    p_L(p, k, d) = \frac{A}{k} \left(\frac{p}{B}\right)^{\frac{d}{2} + C}.
\end{equation}
The fitted parameters for the ansatz with $95\%$ confidence intervals are given in \cref{tab:ansatz-fits}.
\Cref{tab:ler} shows the logical error rates per logical qubit per logical cycle derived from this ansatz for each of the GB codes considered at physical error rates of $p=10^{-3}$ and $p=10^{-4}$.

\begin{table}
    \centering
    \caption{Fitted parameters with $95\%$ confidence intervals for the ansatz in Eq.~\ref{eq:ansatz} for memory experiments and logical measurement experiments.}
    \label{tab:ansatz-fits}
    \begingroup
    \renewcommand{\arraystretch}{1.25}
    \begin{tabular}{|c|c|c|c|}
        \hline
        Experiment & $A$ & $B$ & $C$ \\
        \hline
        Memory & $5.9^{+1.8}_{-1.4}$ & $0.0179^{+0.0006}_{-0.0006}$ & $0.50^{+0.09}_{-0.09}$ \\
        \hline
        Log. Meas. & $6.2^{+1.9}_{-1.4}$ & $0.0158^{+0.0007}_{-0.0007}$ & $0.47^{+0.09}_{-0.09}$ \\
        \hline
    \end{tabular}
    \endgroup
\end{table}

We emphasise that the purpose of these simulation results is to benchmark the capabilities of the architecture and guide code distance choice for resource estimation.
This informs the decision to use most-likely error decoding which correctly decodes all faults of weight less than $d/2$ and avoids error floors that can arise from alternatives such as belief propagation decoders~\cite{raveendran_trapping_2021}.
The problem of developing a sufficiently fast decoder for real-time use by the classical control system of quantum hardware is outside the scope of this paper, and we look forward to addressing it in future work.

\begin{table}
    \centering
    \caption{Error rates per logical qubit and logical cycle for logical measurement in GB codes of each distance at physical error rates of $p=10^{-3}$ and $p=10^{-4}$.
    Values correspond to the central estimates of the fit parameters from the Logical Measurement row of \cref{tab:ansatz-fits} substituted into the ansatz presented in \cref{eq:ansatz-per-qubit}.}
    \label{tab:ler}
    \begingroup
    \renewcommand{\arraystretch}{1.25}
    \begin{tabular}{|c|c|c|c|c|c|}
        \hline
        $p$ & $d=4$ & $d=6$ & $d=10$ & $d=16$ & $d=24$\\
        \hline
        $10^{-3}$ & $8\times 10^{-4}$ & $4\times 10^{-5}$ & $1\times 10^{-7}$ & $3\times 10^{-11}$ & $4\times 10^{-16}$\\
        \hline
        $10^{-4}$ & $3\times 10^{-6}$ & $1\times 10^{-8}$ & $5\times 10^{-13}$ & $1\times 10^{-19}$ & $1\times 10^{-28}$\\
        \hline
    \end{tabular}
    \endgroup
\end{table}

\section{Applications}
\label{sec:applications}
In this section, we show how the Pinnacle Architecture can be applied to two applications: determining the ground state energy of the Fermi-Hubbard model, and factoring RSA integers.

\subsection{Fermi-Hubbard Model}
\label{sec:fh}
In this subsection, we determine the resources required to estimate the ground state energy of the two-dimensional Fermi-Hubbard model using the Pinnacle Architecture. 

\subsubsection{Algorithm}
The two-dimensional Fermi-Hubbard model represents a system of interacting fermions and has the Hamiltonian
\begin{align}
    H &= H_h + H_I \notag \\
    &= \sum_{\langle i,j\rangle} \sum_{\sigma\in \{\uparrow,\downarrow\}} \left(a^\dag_{i,\sigma} a_{j,\sigma} +  a^\dag_{j,\sigma} a_{i,\sigma}\right) + u\sum_i \hat{n}_{i,\uparrow}\hat{n}_{i,\downarrow}.
\end{align}
Here, $i$ denotes the sites of an $L\times L$ lattice, $\langle i,j\rangle$ denotes pairs of nearest neighbours on this lattice, $\sigma\in \{\uparrow,\downarrow\}$ denotes spins states, $a^\dag$ and $a$ represent creation and annihilation operators, respectively, $\hat{n}=a^\dag a$ denotes the number operator, and $u$ denotes the (dimensionless) coupling strength.

We follow Ref.~\cite{campbell_early_2022} in using plaquette Trotterisation to determine the ground state energy of the Fermi-Hubbard model with a relative error of 0.5\% of the total lattice energy.
In order to minimise the number of qubits required, we modify this method by omitting Hamming weight phasing (which requires a system of $1\leq\alpha\leq L^2/2$ ancillary logical qubits), instead simply performing each required phase rotation in sequence.
With this simplification, the number of logical qubits required is
\begin{equation}\label{eq:FH-N-formula}
    N=2L^2+2.
\end{equation}
This accounts for two logical qubits for each lattice site (one for each spin state) and two additional logical ancilla qubits (one for phase estimation and one for repeat until success synthesis)~\cite{campbell_early_2022}.
From Eq.~(F10) of Ref.~\cite{campbell_early_2022}, the $T$ count for one shot of the algorithm is given by
\begin{equation}\begin{split}
    \tau&=6.203\sqrt{\frac{W}{\left(\epsilon(1-x)\right)^3}}\,\times\\
    &\left(N_R\left(1.15\log_2\left(\frac{N_R\sqrt{3W}}{x\sqrt{1-x}\sqrt{\epsilon^3}}\right)+9.2\right)+N_T\right).
\end{split}\end{equation}
Here, $N_T=12L^2$ and $N_R=4L^2$ are the number of $T$ gates and arbitrary $Z$ rotations (which are each synthesised from a number of $\bar{T}$ gates) required per Trotter step.
The number of Trotter steps required for the target precision is given by the prefactor.
In this prefactor, $\epsilon=0.005E_0$ corresponds to the required relative error of 0.5\%, where the energy per site $E_0$ is estimated at 1.02 hartrees for $u=4$ and 0.74 hartrees for $u=8$~\cite{kivlichan_improved_2020}.
$W$ is a parameter that bounds the Trotter error; bounds on this parameter are provided in Ref.~\cite{campbell_early_2022}.
The parameter $x$ reflects a choice of how to split the error budget across different sources; we optimise over this in our analysis.

In addition to T gates, the algorithm also requires additional logical measurements.
In particular, repeat-until-success synthesis is used to perform each of the $N_R$ arbitrary $Z$ rotations in each Trotter step~\cite{paetznick_repeatuntilsuccess_2014, bocharov_efficient_2015}\footnote{For simplicity, we here follow Ref.~\cite{campbell_early_2022} in using this approach. We note that more recent advances in rotation synthesis have since been presented \cite{kliuchnikov_shorter_2023}, which may allow for reduced runtimes if incorporated in future.}.
This method successfully performs the correct rotation with a probability of at least one half~\cite{bocharov_efficient_2015}, and so the expected number of attempts per rotation is at most two.
Since one logical measurement is required per attempt (in addition to the $T$ gates), this adds an average of up to two logical measurements per arbitrary rotation.
Logical measurements are also required for the final phase estimation.
However, since this is only performed once (not for each Trotter step), it is negligible compared with the $T$ count.
We therefore express $\mathcal{T}$---the sum of the $T$ count and number of logical measurements---as
\begin{equation}\begin{split}
    \mathcal{T}&=6.203\sqrt{\frac{W}{\left(\epsilon(1-x)\right)^3}}\,\times\\
    &\left(N_R\left(1.15\log_2\left(\frac{N_R\sqrt{3W}}{x\sqrt{1-x}\sqrt{\epsilon^3}}\right)+11.2\right)+N_T\right).
\end{split}\end{equation}
This quantity determines the number of logical cycles on the Pinnacle Architecture.
It is approximately constant in $L$ because the allowed error is relative to the total energy, which scales as $L^2$~\cite{kivlichan_improved_2020}.

\subsubsection{Implementation and Results}
Concretely, we consider the case of even $L\leq 32$ and $u=4$.
In this regime, $N\leq 2050$ and we find numerically that the number of logical cycles satisfies $\mathcal{T}=8\times 10^6$, which also upper bounds the $T$ count.
This implies that the logical spacetime volume satisfies $N\mathcal{T}\leq 2\times 10^{10}$.
Hence, the algorithm can be implemented with negligible failure probability provided the error rate per logical qubit and logical cycle satisfies $p_L \ll 5\times 10^{-11}$ and the $\ket{T}$ state fidelity satisfies $p_T\ll 10^{-7}$.
With reference to \cref{tab:ler}, this is satisfied by using the $d=24$ GB code instantiation of the architecture for a physical error rate of $p=10^{-3}$ and the $d=10$ instantiation for $p=10^{-4}$, along with the corresponding $p_{\text{out}}=10^{-9}$ magic engines.

To implement this on the Pinnacle Architecture, we use a single processing unit with $\beta=\left\lceil N/k\right\rceil$ processing blocks to ensure there are $\kappa\geq N$ logical qubits available.
We also account for a magic engine for this processing unit, but do not include memory.
The required number of physical qubits is therefore given by
\begin{equation}
    n=n_{pb}\left\lceil\frac{N}{k}\right\rceil+n_{me}.
\end{equation}
Substituting $N=2L^2+2$ and the values for $n_{pb}$, $n_{me}$ and $k$ from \cref{sec:instantiation-components}, the numbers of physical qubits required with physical error rates of $p=10^{-3}$ and $p=10^{-4}$, respectively, are therefore
\begin{align}
    n_{10^{-3}} &= 1620\left\lceil \frac{L^2+1}{8} \right\rceil+4410,\\
    n_{10^{-4}} &= 452\left\lceil \frac{L^2+1}{6} \right\rceil+1807.
\end{align}
\Cref{tab:FH-results} shows the number of physical qubits required for even $L\leq 32$, along with the runtimes with code cycle times of 1 \textmu s and 1 ms.
\Cref{fig:FH-graph} shows that the number of physical qubits required is an order of magnitude smaller than those required using surface codes in Ref.~\cite{kivlichan_improved_2020}.

\begin{table}
    \centering
    \caption{Physical qubits and runtime required to perform one shot of Fermi-Hubbard ground state energy estimation on an $L\times L$ lattice with a relative error of $\leq 0.5\%$ of the total lattice energy. 
    The runtime is approximately independent of $L$ because the relative error allows for fewer Trotter steps for larger lattices; runtimes for each value of $L$ are equal to or slightly smaller than those given. 
    kq represents kiloqubits ($\times 10^3$ qubits).}
    \label{tab:FH-results}
    \begin{tabular}{|c|c|c|}
        \hline
        \multicolumn{3}{|c|}{Physical Qubits}\\
        $L$ & $p=10^{-3}$ & $p=10^{-4}$\\
        \hline 
        8 & 19 kq & 5.6 kq\\
        10 & 25 kq & 8.3 kq\\
        12 & 35 kq & 12 kq\\
        14 & 45 kq & 16 kq\\
        16 & 58 kq & 20 kq\\
        18 & 71 kq & 25 kq\\
        20 & 87 kq & 31 kq\\
        22 & 103 kq & 37 kq\\
        24 & 123 kq & 44 kq\\
        26 & 142 kq & 52 kq\\
        28 & 165 kq & 60 kq\\
        30 & 187 kq & 69 kq\\
        32 & 213 kq & 78 kq\\
        \hline
        \multicolumn{3}{|c|}{Runtime (Upper Bound)}\\
        Code Cycle & $p=10^{-3}$ & $p=10^{-4}$\\
        \hline
        1 \textmu s & 3.8 min & 1.8 min\\
        1 ms & 2.6 days & 1.3 days\\
        \hline
    \end{tabular}
\end{table}

A full determination of the ground state energy requires multiple shots of the algorithm, with the number dependent on the overlap between the initial state and the true ground state.
These shots can be performed in series (with a proportional increase in run time) or could be performed in parallel using multiple separate processing units (with a proportional increase in the required physical qubits).

\subsection{RSA Factoring}
In this subsection, we show that the Pinnacle Architecture can be used to efficiently perform the factoring necessary to break RSA encryption.

\subsubsection{Algorithm}
The algorithm we use is a generalisation of that presented by Gidney in Ref.~\cite{gidney_how_2025}, which uses techniques developed by Ekerå and Håstad~\cite{ekera_quantum_2017} and by Chevignard et al.~\cite{chevignard_reducing_2025}.
We refer to this algorithm as Gidney's algorithm. 
This algorithm uses residue number system arithmetic to replace modular arithmetic over $N_\mathrm{RSA}$ (the number being factored) with modular arithmetic over a set of primes $P$ that each have size polylogarithmic in $N_\mathrm{RSA}$.
This reduces the number of qubits required for the working register required for modular exponentiation (the dominant part of the factoring algorithm) from $\Theta\left(\log N_\mathrm{RSA}\right)$ to $\Theta\left(\log\log N_\mathrm{RSA}\right)$, reducing the space overhead.
For each individual prime, the time overhead for the modular exponentiation is also significantly reduced since the time cost for arithmetic operations such as addition scales with the size of the input registers. 
However, in Gidney's algorithm, this does not translate to an overall reduction in runtime because the $|P|$ primes are processed in series.

We generalise Gidney's algorithm by allowing for the possibility of processing multiple primes in parallel. 
This is done by adding $\rho-1$ additional ancillary working registers, for any positive integer $\rho\leq |P|$.
The outer loop of Gidney's algorithm can then be parallelised across the $\rho$ working registers, allowing the $|P|$ primes to be processed in $\lceil|P|/\rho\rceil$ batches.
Between computation and uncomputation of each batch, parallel reduction is used to combine the accumulators of each register onto the  accumulator of the first working register by aggregating accumulators pairwise in the form a binary tree.
This ensures that the final value of this accumulator matches that of Gidney's algorithm, since it simply amounts to a reordering of the sum used to calculate the approximate modular exponential (Eq.~(20) of Ref.~\cite{gidney_how_2025}), while the other working registers are fully uncomputed after each batch is processed to ensure that they end in the trivial state.
The required result may therefore be extracted by measurement and classical processing in the same way.
Letting $\mathcal{T}_G$ be the time cost for Gidney's algorithm, our algorithm has a time cost of
\begin{equation}
    \mathcal{T}=\left\lceil\frac{|P|}{\rho}\right\rceil\left(\frac{1}{|P|}\mathcal{T}_G+O\left(\log \rho\right)\right),
\end{equation}
where the $O\left(\log \rho\right)$ term accounts for cost associated with combining accumulators.
Since $|P|$ is large (e.g., $\approx 2.1\times 10^4$ for RSA-2048~\cite{gidney_how_2025}), this can allow for a reduction in the time cost by many orders of magnitude.

Importantly, the additional space incurred by this form of parallelisation is not proportional to the reduction in time cost.
To see why, note that the full register in Gidney's algorithm has two components --- an input register of $m=\Theta\left(\log N_\mathrm{RSA}\right)$ logical qubits and a working register of $N_w=\Theta\left(\log\log N_\mathrm{RSA}\right)$ logical qubits used to perform the approximate modular exponentiation.
While the working register must be duplicated to allow for parallelisation, the same input register can be reused for many primes as it is only accessed by lookup operations.
These operations commute for different primes since they all use only gates with controls on the input qubits.
More precisely, given that a working register accesses the input register in windows of $w_1$ logical qubits at a time, pipelining these accesses can allow $\lceil m/w_1\rceil$ registers to run in parallel using a single input register.
Therefore, the required number of logical qubits is
\begin{equation}\label{eq:N}
    N=\left\lceil\frac{\rho}{\lceil m/w_1\rceil}\right\rceil m + \rho N_w.
\end{equation}
Noting that $\left\lceil\frac{\rho}{\lceil m/w_1\rceil}\right\rceil=1$ for $\rho\lesssim 200$ and $m$ is an order of magnitude larger than $N_w$, this can be significantly smaller than $\rho N_1=\rho(m+N_w)$ (where $N_1$ is the number of logical qubits required for Gidney's algorithm). 
This leads to significant spacetime savings from parallelisation, as shown in \cref{fig:logical-spacetime}.

\begin{figure}[]
    \centering
    \includegraphics[width=\linewidth]{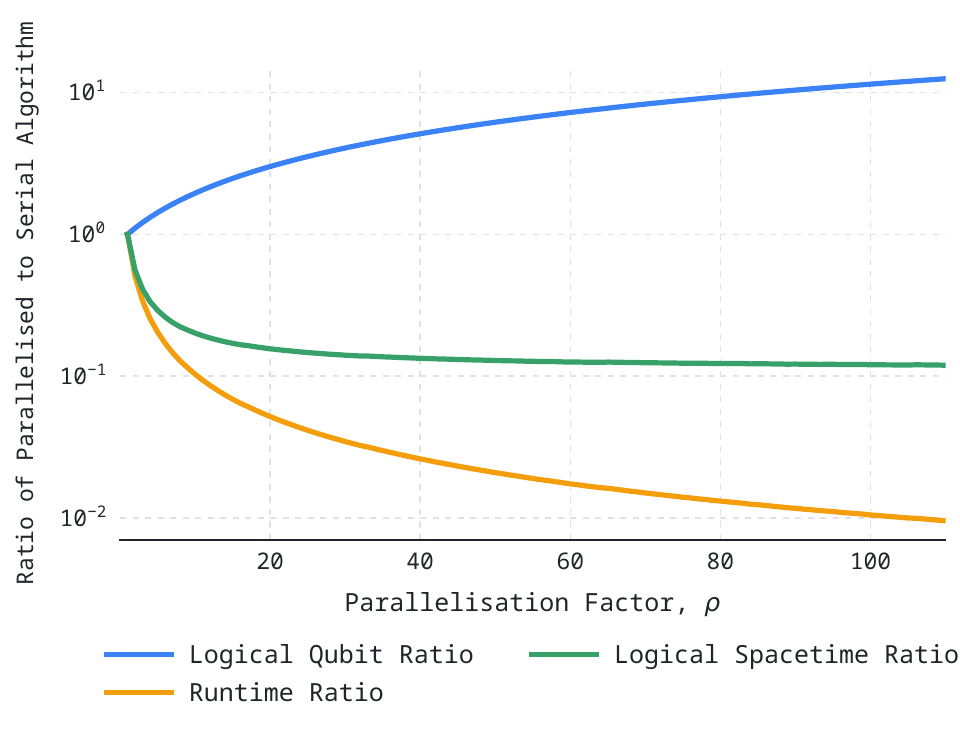}
    \caption{Comparison of space, time and spacetime cost for our parallelised algorithm to Gidney's algorithm or Ref.~\cite{gidney_how_2025}. 
    For the purpose of this plot, all algorithmic parameters (other than $\rho$) are chosen to match those of the $n=2048$ row of Table 5 of Ref.~\cite{gidney_how_2025}.}
    \label{fig:logical-spacetime}
\end{figure}

Our modified algorithm allows for very efficient parallelism, with orders of magnitude reductions in the time overhead achievable with a smaller increase in the space overhead.
In particular, we note that while Gidney's algorithm has a longer runtime than the earlier implementation of Gidney and Ekerå~\cite{gidney_how_2021}, our parallelised version can achieve a significantly faster runtime.
This motivates our choice of this algorithm even for architectures with slow clock cycles, where runtime is especially important.

\subsubsection{Implementation on Pinnacle Architecture}
\label{sec:implementation}
To implement the algorithm on the Pinnacle Architecture, we begin by allocating a processing unit for each working register.
These processing units can run in parallel throughout most of the computation.
The only exception is the relatively short periods when the accumulators of the working registers are  being aggregated; during this process, Clifford frame cleaning is used following each pairwise interaction of accumulators to ensure that the processing units do not remain joined thereafter.

Each of these working registers is equipped with enough logical qubits to allow implementation of Gidney's algorithm.
This corresponds to: an $f$ logical qubit sub-register for the overall accumulator; an $\ell+\mathrm{len}{(m)}$ logical qubit sub-register on which discrete-log values are accumulated for each prime; two ancillary sub-registers each with $\max{\left(f,\ell+\mathrm{len}{(m)}\right)}$ logical qubits; a third ancillary sub-register with $\ell$ logical qubits; and one additional ancillary logical qubit for compiling Toffoli gates from T gates using the circuit of Ref.~\cite{jones_lowoverhead_2013}.
As in Ref.~\cite{gidney_how_2025}, $f$ is the length of the truncated accumulator, $\ell$ is the bit length of the residue primes and $\mathrm{len}{(m)}=\lfloor \log_2(m)\rfloor +1$ is the bit length of $m$.
We therefore require $\kappa$ logical qubits, given by
\begin{equation}\label{eq:kappa}
    \kappa=f+2\ell+\mathrm{len}{(m)}+2\max(f,\ell+\mathrm{len}{(m)})+1.
\end{equation}
To achieve this, we allocate $\lceil\kappa/k\rceil$ processing blocks to each processing unit, where $k$ is the number of logical qubits per code block.

The input register (or multiple input registers if $\rho\geq\lceil m/w_1\rceil$) is associated with memory in the architecture.
Access to this memory occurs during loop 1 of Gidney's algorithm, when windows of $w_1$ logical qubits are used as the address for a lookup operation targeted on each working register.
As discussed in \cref{sec:gb-memory}, we fix $w_1=k/2$, and we also enforce the condition that $\ell\geq w_1$.
This ensures that there are at least $w_1$ unused logical qubits onto which the window can be fanned out, since at the time of this loop, $f+3\ell+3\,\mathrm{len}{(m)}$ logical qubits of each working register are in use, meaning that there are at least $\ell$ unused logical qubits.

Otherwise, we follow the decomposition into addition, lookup and phaseup subroutines and implementation of these subroutines presented in Ref.~\cite{gidney_how_2025}.

\subsubsection{Resource Analysis}
\label{sec:rsa-resources}
\paragraph{Physical Qubits:}
We now determine the number of physical qubits required.
Each working register corresponds to a processing unit with $\kappa(f,\ell,m)$ logical qubits, as given in \cref{eq:kappa}, along with a magic engine.
The number of physical qubits required for the $\rho$ working registers is therefore
\begin{equation}\label{eq:nw}
    n_w=\rho\left(n_{pb}\left\lceil\frac{\kappa(f,\ell,m)}{k}\right\rceil+n_{me}\right).
\end{equation}
Each memory stores $m$ logical qubits and there are $\left\lceil\frac{\rho}{\lceil m/w_1\rceil}\right\rceil$ such memories.
Each memory requires $\nu=\left\lceil m/k\right\rceil$ code blocks of an $\llbracket n,k,d\rrbracket$ code (with $2n$ physical qubits each).
For each processing unit, we also require a port with $n_g+n_b$ additional physical qubits.
The number of physical qubits required for the memory and ports is therefore
\begin{equation}\label{eq:nm}
    n_m=2n\left\lceil\frac{\rho}{\lceil m/w_1\rceil}\right\rceil \left\lceil\frac{m}{k}\right\rceil + \rho (n_g+n_b).
\end{equation}
Hence the total number of physical qubits is $n_\mathrm{total}=n_w+n_m$.

\paragraph{Runtime:}
We define the total number of logical cycles required per iteration of the outer loop of Gidney's algorithm to be $\Sigma$, which is equal to the sum of the final column of \cref{tab:subroutines}.
We define the number of additional logical cycles required to combine the accumulators in each batch of our parallelised version of the algorithm to be $\Lambda$.
When $\rho=1$, $\Lambda=0$; otherwise, we have:
\begin{equation}
\Lambda=27f\lceil \log_2(\rho)\rceil-4f+9\lceil l/w_4 \rceil(2^{w_4}-w_4-1)
\end{equation}
which accounts for the binary tree used to combine accumulators, its uncomputation, Clifford frame cleaning between pairs of accumulators, and the uncomputation of loop 4 on the ancillary working registers.
Since we require $\left\lceil |P|/\rho \right\rceil$ iterations, this implies that the number of logical cycles required for the full outer loop is $\left\lceil |P|/\rho \right\rceil\left(\Sigma+\Lambda\right)$.

Following the completion of this outer loop, we have two further minor steps that are performed once.
First, the uncomputation of loop 1~\cite{gidney_how_2025}---which is performed in parallel on all streams---consists of a single instance of the loop 1 lookup and addition.
Referring to \cref{tab:subroutines}, it takes 
$\upsilon=\lceil m/w_1\rceil\left(6\left(2^{w_1}-w_1+\ell+\mathrm{len}{(m)}-2\right)+2w_1\right)$ logical cycles.
Second, there is a frequency measurement (i.e. inverse quantum Fourier transform followed by measurement); the number of logical cycles required for this is negligible compared to $\Sigma$ so we follow Ref.~\cite{gidney_how_2025} in omitting it from our accounting.
Assuming a perfect success rate of magic engines we can therefore write the total number of logical cycles as
\begin{equation}\label{eq:log-cycles}
    \mathcal{T}'=\left\lceil\frac{|P|}{\rho}\right\rceil\left(\Sigma+\Lambda\right)+\upsilon.
\end{equation}
The proportion of logical cycles requiring $T$ states is approximately (in fact, slightly less than) $2/3$, which leads to an adjusted formula for the true number of logical cycles accounting for the magic engine rejection rate of $p_r$ of 
\begin{equation}
    \mathcal{T}=\left(\frac{2}{3}(1-p_r)^{-1}+\frac{1}{3}\right)\mathcal{T}'.
\end{equation}
Then the total runtime per shot $t=t_l\mathcal{T}$ is the product of the number of logical cycles $\mathcal{T}$ and the time per logical cycle $t_l=d_tt_c$.

The expected number of shots required for the factoring, from Ref.~\cite{gidney_how_2025}, is given by
\begin{equation}\label{eq:shots}
    \sigma = \frac{s+1}{0.99p_S\left(1-2N_\mathrm{RSA}\sqrt{\frac{s+2}{2^{f+1}sw_1}}\right)},
\end{equation}
where $s$ is the Ekera-Håstad parameter, and $p_S$ is the probability that a shot does not have a logical error, which is given by
\begin{equation}
    p_S =\left(1-p_L\right)^{N\mathcal{T}}\left(1-p_T\right)^\tau,
\end{equation}
where $p_L$ is the logical error rate per logical qubit per logical cycle, $p_T$ is the infidelity of output $T$ states from the magic engine, $N$ and $\mathcal{T}$ are the total number of logical qubits and logical cycles per shot and $\tau$ is the $T$ count per shot, which is upper bounded by
\begin{equation}
\tau \leq \frac{2}{3}\left(|P|(\Sigma+\Lambda)+\upsilon\right),
\end{equation}
Hence the expected runtime for the factoring is $t_\mathrm{total}=\sigma t$.

\begin{table*}
    \centering
    \caption{Time cost accounting for all subroutines required for each prime in Gidney's algorithm.
    Loop numbers refer to Gidney's algorithm, as presented in Ref.~\cite{gidney_how_2025}.
    For lookups and phaseups, the size is the address size; for additions, it is the size of the registers being added.
    For addition (loop 2) the size is the average over the loop, since it varies.
    The $T$ Count column is the number of $T$ gates required, which is four times the Toffoli count.
    The Logical Cycles column is the total number of logical cycles, including both $T$ state injections and other logical measurements.
    Since each Toffoli gate requires four $T$ gates and one logical measurement~\cite{jones_lowoverhead_2013}, and one additional logical measurement is required for measurement-based uncompute, this is nearly always $3/2$ times the $T$ count.
    The only exception is Lookup (Loop 1) which also requires additional logical measurements for Clifford frame cleaning.}
    \label{tab:subroutines}
    \resizebox{\textwidth}{!}{
    \begingroup
    \renewcommand{\arraystretch}{1.25}
    \begin{tabular}{|c|c|c|c|c|}
        \hline
        Subroutine & Size & Instances & $T$ Count & Logical Cycles\\
        \hline
        Lookup (Loop 1)
        & $w_1$
        & $\lceil m/w_1\rceil$
        & $4\lceil m/w_1\rceil\left(2^{w_1}-w_1-1\right)$
        & $\lceil m/w_1\rceil\left(6\left(2^{w_1}-w_1-1\right)+2w_1\right)$\\ 
        \hline
        Addition (Loop 1)
        & $\ell+\mathrm{len}{(m)}$
        & $\lceil m/w_1\rceil$
        & $4\lceil m/w_1\rceil\left(\ell+\mathrm{len}{(m)}-1\right)$
        & $6\lceil m/w_1\rceil\left(\ell+\mathrm{len}{(m)}-1\right)$\\
        \hline
        Addition (Loop 2)
        & $\left(2\ell+\mathrm{len}{(m)}+1\right)/2$
        & $4\,\mathrm{len}{(m)}$
        & $8\,\mathrm{len}{(m)}\left(2\ell+\mathrm{len}{(m)}-1\right)$
        & $12\,\mathrm{len}{(m)}\left(2\ell+\mathrm{len}{(m)}-1\right)$\\
        \hline
        Lookup (Loop 3)
        & $2w_3$
        & $4\lceil \ell/w_3\rceil^2 -8\lceil \ell/w_3\rceil +1$
        & $4\left(4\lceil \ell/w_3\rceil^2 -8\lceil \ell/w_3\rceil +1\right)\left(2^{2w_3}-2w_3-1\right)$
        & $6\left(4\lceil \ell/w_3\rceil^2 -8\lceil \ell/w_3\rceil +1\right)(2^{2w_3}-2w_3-1)$\\
        \hline
        Addition (Loop 3)
        & $\ell$
        & $7\lceil \ell/w_3\rceil^2-14\lceil \ell/w_3\rceil$
        & $28\left(\lceil \ell/w_3\rceil^2-2\lceil \ell/w_3\rceil\right)(\ell-1)$
        & $42(\ell-1)\left(\lceil \ell/w_3\rceil^2-2\lceil \ell/w_3\rceil\right)$\\
        \hline
        Lookup (Loop 4)
        & $w_4$
        & $3\lceil \ell/w_4\rceil/2$
        & $6\lceil \ell/w_4\rceil\left(2^{w_4}-w_4-1\right)$
        & $9\lceil \ell/w_4\rceil\left(2^{w_4}-w_4-1 \right)$\\
        \hline
        Addition (Loop 4)
        & $f$
        & $5\lceil \ell/w_4\rceil/2$
        & $10(f-1)\lceil \ell/w_4\rceil$
        & $15(f-1)\lceil \ell/w_4\rceil$\\
        \hline
        Phaseup (Loop 4)
        & $w_4$
        & $\lceil \ell/w_4 \rceil$
        & $4\lceil \ell/w_4 \rceil\left(2^{\lceil w_4/2\rceil}+2^{\lfloor w_4/2\rfloor}-w_4-2\right)$
        & $6\lceil \ell/w_4\rceil \left(2^{\lceil w_4/2\rceil}+2^{\lfloor w_4/2\rfloor}-w_4-2\right)$\\
        \hline
        Phaseup (Loop 3.2)
        & $w_3$
        & $3\lceil \ell/w_3 \rceil^2/2-3\lceil \ell/w_3 \rceil$
        & $6\left(\lceil \ell/w_3 \rceil^2-2\lceil \ell/w_3 \rceil\right)\left(2^{\lceil w_3/2\rceil}+2^{\lfloor w_3/2\rfloor}-w_3-2\right)$
        & $9\left(\lceil \ell/w_3\rceil^2-2\lceil \ell/w_3\rceil\right)
          \left(2^{\lceil w_3/2\rceil}+2^{\lfloor w_3/2\rfloor}-w_3-2\right)$\\
        \hline
        Phaseup (Loop 3.1)
        & $2w_3$
        & $1$
        & $4\left(2^{w_3+1}-2w_3-2\right)$
        & $6\left(2^{w_3+1}-2w_3-2\right)$\\
        \hline
    \end{tabular}
    \endgroup
    }
\end{table*}

\subsubsection{Results}
We now consider the resources---both physical qubits and time---required to factor an RSA-2048 integer on the instantiation of the Pinnacle architecture presented in \cref{sec:instantiation}, given different hardware parameters, namely the code cycle time and physical error rate.
Following Ref.~\cite{gidney_how_2025}, we expect the required logical error rate per logical qubit per logical cycle to be $\lesssim 10^{-14}$ (i.e.,~$\lesssim 10^{-15}$ per code cycle).
With reference to \cref{tab:ler}, this motivates a choice of the $d=24$ GB code architecture for a physical error rate of $p=10^{-3}$ and $d=16$ for the $p=10^{-4}$ architecture.
The precise logical failure rate of the algorithm varies somewhat as other parameters affect the number of logical qubits and logical cycles; this effect is accounted for in the number of shots, given in \cref{eq:shots}.

To determine the minimal required physical qubits and runtime, we optimise over the algorithmic parameters of Gidney's algorithm with the following ranges:
\begin{itemize}
    \item Ekerå-Håstad parameter, $1\leq s \leq 16$;
    \item Accumulator truncation, $24 \leq f \leq 59$;
    \item Prime bit length, $18 \leq \ell \leq 25$;
    \item Loop 3 window size, $2\leq w_3 \leq 6$;
    \item Loop 4 window size, $2\leq w_4 \leq 6$.
\end{itemize}
We also optimise the parallelisation factor over the range $1\leq\rho\leq |P|$, where $|P|\approx nm/\ell w_1$ is the number of primes in the residue system.
Recall that, unlike in Ref.~\cite{gidney_how_2025}, we fix the loop 1 window size as $w_1=k/2$.
Following Ref.~\cite{gidney_how_2025}, we also impose the feasibility condition that the number of primes of bit length $\ell$, $\pi(\ell)\approx 2^{\ell-1}/\ell\ln(2)$, cannot be smaller than the number of primes $|P|$.
The results of this optimisation are shown in \cref{tab:results} and \cref{fig:heatmaps}.

\begin{table}
    \centering
    \caption{Minimum number of physical qubits required to complete factoring in a range of expected runtimes for a range of hardware parameters.
    Mq and kq represent megaqubits (i.e., $\times 10^6$ qubits) and kiloqubits ($\times 10^3$ qubits) respectively.}
    \label{tab:results}
    \begingroup
    \renewcommand{\arraystretch}{1.2}
    \begin{tabular}{|c|c|c|c|c|c|}
        \hline
        Code & Physical & \multicolumn{4}{c|}{Physical Qubits for Runtime $\leq$} \\
        Cycle & Error Rate & 1 year & 1 month & 1 week & 1 day \\
        \hline
        \multirow{2}{*}{1 \textmu s} & $10^{-3}$ & 94 kq & 94 kq & 135 kq & 381 kq \\ 
            & $10^{-4}$ & 53 kq & 53 kq & 64 kq & 141 kq \\ 
        \hline
        \multirow{2}{*}{10 \textmu s} & $10^{-3}$ & 94 kq & 193 kq & 501 kq & 2.9 Mq \\ 
            & $10^{-4}$ & 53 kq & 82 kq & 176 kq & 845 kq \\ 
        \hline
        \multirow{2}{*}{100 \textmu s} & $10^{-3}$ & 179 kq & 1.0 Mq & 4.1 Mq & 30 Mq \\ 
            & $10^{-4}$ & 77 kq & 323 kq & 1.2 Mq & 8.2 Mq \\ 
        \hline
        \multirow{2}{*}{1 ms} & $10^{-3}$ & 871 kq & 9.5 Mq & 44 Mq & - \\ 
            & $10^{-4}$ & 278 kq & 2.7 Mq & 12 Mq & 108 Mq \\ 
        \hline
    \end{tabular}
    \endgroup
\end{table}

We find that fewer than one hundred thousand physical qubits are required for factoring at a physical error rate of $p=10^{-3}$ in an expected runtime of one month.
Alternatively, with the same error rate and code cycle time, factoring can be completed in one week with 139 thousand physical qubits---compared to one million in Ref.~\cite{gidney_how_2025}---or in one day with 400 thousand physical qubits.
With one million physical qubits, factoring takes an expected time of eight hours, compared with five days in Ref.~\cite{gidney_how_2025}.

We can also consider our results in regimes relevant to other hardware platforms.
For example, at a physical error rate of $p=10^{-4}$---relevant to trapped ions~\cite{hughes_trappedion_2025}---the minimum number of physical qubits required for factoring is 53 thousand.
For a typical trapped ion code cycle time of $1$ ms~\cite{litinski_how_2023}, factoring with one million physical qubits takes an expected time of less than three months.
Factoring in a shorter time with these parameters is also possible with a feasible number of physical qubits.
For example, factoring can be completed in one month with 2.7 million physical qubits or in one week with 12 million physical qubits.
By comparison, adjusting the results of Ref.~\cite{gidney_how_2025} to such a code cycle time would give a prohibitively long runtime, while Ref.~\cite{beverland_assessing_2022} estimated a runtime of 3 years with 8.6 million qubits for a surface code architecture with trapped ion parameters.

Our results are also comparable to the results of Ref.~\cite{zhou_resource_2025}, which apply to a neutral atom platform with physical error rate of $p=10^{-3}$ and a code cycle time of 1 ms. 
Specifically, with 19 million qubits we find a runtime of 15 days, compared with the result of 5.6 days in Ref.~\cite{zhou_resource_2025}
The low runtime achieved by that work is achieved by using transversal gates with algorithmic fault tolerance to reduce the logical cycle time to be equal to the code cycle time, compared with $d_t=26$ code cycles on our architecture.
Our architecture has the potential to support an analogous reduction by incorporating fast surgery~\cite{baspin_fast_2025}, with the potential to achieve an order-of-magnitude lower runtime or physical qubit number.
We look forward to realising this potential in future work.

\section{Conclusion}
We have presented the Pinnacle Architecture, which leverages the high encoding rate of QLDPC codes to achieve universal quantum computing with order-of-magnitude overhead reductions compared with surface code architectures.
In particular, we have shown that factoring 2048-bit RSA integers, which requires close to a million physical qubits using surface code architectures~\cite{gidney_how_2025}, can be done with fewer than one hundred thousand physical qubits on the Pinnacle Architecture.
Given the challenges posed in scaling from one hundred thousand to one million physical qubits, such as the need on many hardware platforms for networking between separated devices~\cite{mohseni_how_2025}, this has the potential to significantly hasten the onset of practical quantum computing.

Importantly, this is only the beginning of the story for QLDPC architectures.
While it has been suggested that reducing the physical qubit count for 2048-bit RSA factoring by an order of magnitude (i.e., to one hundred thousand physical qubits) is implausible on surface code architectures~\cite{gidney_how_2025}, the same cannot be said of QLDPC architectures.
Indeed, a range of higher rate QLDPC codes are known than the generalised bicycle codes used here~\cite{breuckmann_quantum_2021}.
Incorporating such codes into the Pinnacle Architecture, combined with further optimisation of its components, could plausibly achieve such a reduction.
As such, this work serves not only as a major step forward in its own right, but also a foundation on which we expect to make substantial further progress.

\section{Acknowledgements}
We thank Stephen Bartlett and Kevin Obenland for discussions, Calida Tang for organisational support, Fernando Borretti for software engineering support, Max McIsted for assistance with preparing the figures, and Scott Aaronson for thoughtful feedback on the title.
We also thank Marcel Hinsche for identifying a technical error in the method used to combine accumulators during factoring in the first version of this paper, and for assisting in its correction.

\appendix

\section{Cost of Clifford Frame Cleaning}
\label{sec:cleaning}
In this appendix, we show that Clifford frame cleaning can be completed using the number of logical measurements claimed in \cref{sec:parallel-operation}.
Specifically, we show that it can be completed in the same number of Pauli $\pi/4$ rotations, which we write as $R_{\pi/4}(P)$ for a Pauli rotation axis $P$.
Each such rotation can be implemented by a joint logical Pauli measurement with a single logical ancilla in the $\ket{\bar{0}}$ state~\cite{litinski_game_2019}.

We present our proofs using the Pauli and Clifford operator representation of Ref.~\cite{aaronson_improved_2004}.
Specifically, an $n$-qubit Pauli operator is associated (uniquely, up to a global phase) with an element $v\in\mathbb{Z}_2^{2n}$ by the expression $P_v=\prod_{i=1}^n X_i^{v_i}Z_i^{v_{n+i}}$.
In this formalism, $P_uP_v=P_{u+v}$.
Commutation relations are specified by the symplectic inner product
\begin{equation}
    \langle u,v\rangle = uJv^T,\quad J=\begin{bmatrix}0 & I_n \\ I_n & 0\end{bmatrix},
\end{equation}
and $u$ and $v$ are treated as row vectors.
Specifically, $P_u$ and $P_v$ commute if $\langle u,v\rangle=0$, and anti-commute if $\langle u,v\rangle=1$.
The symplectic complement $W^\perp$ of subspace $W$ is a subspace whose elements commute with all elements in $W$.

In the same formalism, an $n$-qubit Clifford operator $U$ can be represented by a $2n\times 2n$ matrix $M_U$ that preserves the symplectic inner product (i.e., such that $M_UJM_U^T=J$).
The action of $U$ on $P_v$ by conjugation corresponds to matrix multiplication on the right (i.e., $P\mapsto UPU^\dag$ corresponds to $v\mapsto vM_U$).
The product of two Clifford operators $VU$ therefore corresponds to the product of their matrix representations $M_UM_V$.
A Pauli $\pi/4$ rotation $R_{\pi/4}(Q)$ is a special type of Clifford operator which is specified by an $n$-qubit Pauli operator $Q$.
It acts as $R_{\pi/4}(Q)PR_{\pi/4}(Q)^\dag=PQ$ if $P$ and $Q$ anti-commute, and as $R_{\pi/4}(Q)PR_{\pi/4}(Q)^\dag=P$ if $P$ and $Q$ commute.
Therefore, $R_{\pi/4}(P_u)$ acts on $v\in\mathbb{Z}_2^{2n}$ as $E_u(v)=v+\langle u,v\rangle u$.

There are two cases of Clifford frame cleaning to consider: the general case, where $4w$ steps are required to clean off $w$ qubits; and the case of cleaning a memory port, where only $2w$ steps are required.
We first consider the general case.

\begin{lemma}\label{lem:cleaning1}
Let $U$ be an $n$-qubit Clifford operator.
Then for $w\leq n$, there exists a sequence of $4w$ Pauli operators $P_1,\dots,P_{4w}$ such that $R_{\pi/4}(P_1)R_{\pi/4}(P_2)\ldots R_{\pi/4}(P_{4w})U$ is a Clifford operator supported only on the last $n-w$ qubits.
\end{lemma}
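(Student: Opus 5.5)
We argue in the symplectic representation, one qubit at a time, spending four transvections $E_u$ per qubit. Write $x_i,z_i$ for the unit vectors of $X_i$ and $Z_i$. For a Clifford $V$, the product $R_{\pi/4}(P_1)\cdots R_{\pi/4}(P_j)V$ acts on Paulis by conjugating first by the rotations and then by $V$. It therefore suffices to build a Clifford $W=R_{\pi/4}(P_1)\cdots R_{\pi/4}(P_{4w})U$ satisfying $WX_iW^\dagger=X_i$ and $WZ_iW^\dagger=Z_i$ for all $i\le w$ (up to sign). A Clifford that fixes the Paulis $X_i,Z_i$ for $i\le w$ up to sign commutes with them up to those signs. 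It therefore acts on the first $w$ qubits as a Pauli and, after absorbing that Pauli, is supported on the last $n-w$ qubits. In the paper's row-vector convention the goal is $x_iM_W=x_i$ and $z_iM_W=z_i$ for $i\le w$, where $M_W=M_UM_{R_{4w}}\cdots M_{R_1}$. Equivalently, the images $a_i=x_iM_U$ and $b_i=z_iM_U$ must be carried back to $x_i$ and $z_i$ by a product of $4w$ transvections. Since $M_U$ is symplectic, the pairs $(a_i,b_i)$ form a symplectic basis: $\langle a_i,b_j\rangle=\delta_{ij}$, and all other pairings vanish.

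We proceed by induction on the qubit index $i$. Suppose we already have a symplectic $M$ with $x_jM=x_j$ and $z_jM=z_j$ for $j<i$. Let $a=x_iM$ and $b=z_iM$. By symplecticity, $a$ and $b$ lie in the symplectic complement $S^\perp$ of $S=\mathrm{span}\{x_j,z_j:j<i\}$, and $x_i,z_i$ lie there as well. We want at most four transvections $E_u$ with $u\in S^\perp$ that map $a\mapsto x_i$ and then $b\mapsto z_i$. Because each such $u$ commutes with every vector of $S$, these transvections fix $S$ pointwise and so preserve the earlier progress.

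The first stage sends $a$ to $x_i$ using at most two transvections. If $\langle a,x_i\rangle=1$, the single transvection $E_{a+x_i}$ works, since $\langle a+x_i,a\rangle=1$ gives $a\mapsto a+(a+x_i)=x_i$. If $\langle a,x_i\rangle=0$ and $a\neq x_i$, we choose an intermediate $c\in S^\perp$ with $\langle a,c\rangle=\langle c,x_i\rangle=1$ and use $E_{c+x_i}E_{a+c}$. Such a $c$ exists because $S^\perp$ is a nondegenerate symplectic space and the two linear conditions are independent, $a$ and $x_i$ being linearly independent nonzero vectors. If $a=x_i$, we use trivial steps (for example $u=0$) to keep the count exactly four.

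The second stage sends $b$ to $z_i$ while fixing $x_i$, now inside $T=S^\perp\cap\{x_i\}^\perp$. The new image $b'$ still satisfies $\langle x_i,b'\rangle=1$. Transvections by vectors $u$ with $\langle u,x_i\rangle=0$ fix $x_i$. The vector $b'+z_i$ commutes with $x_i$, since $\langle x_i,b'\rangle=\langle x_i,z_i\rangle=1$. So if $\langle b',z_i\rangle=1$, the transvection $E_{b'+z_i}$ suffices. Otherwise we pick $c$ with $\langle c,x_i\rangle=0$, $\langle c,b'\rangle=1$ and $\langle c,z_i\rangle=1$, such that both $b'+c$ and $c+z_i$ commute with $x_i$. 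One natural candidate is $c=z_i+y$ with $y$ in the complement of $\{x_i,z_i\}$, and this requires a small case check.

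The main obstacle is showing that this intermediate $c$ exists in every case. The delicate situation is when $b'+z_i$ lies in $\mathrm{span}\{x_i\}$. That case is handled by first noting $b'=z_i+x_i$ and then using $E_{x_i}$, which maps $z_i+x_i\mapsto z_i$ and fixes $x_i$. Once existence is established, each qubit costs at most four transvections, and composing over $i=1,\dots,w$ gives $4w$ rotations whose axes $P_j=P_u$ yield the required ordering.
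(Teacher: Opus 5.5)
Your plan matches the paper's. You induct over qubits, use at most two transvections with axes in $S^\perp$ to send the image of $X_i$ to $x_i$, then at most two more commuting with $x_i$ to send the image of $Z_i$ to $z_i$. Your first stage is correct: you get the intermediate vector from nondegeneracy of $S^\perp$, where the paper writes $u_k$ down explicitly. The subcase $\langle b',z_i\rangle=1$ of the second stage is also correct. (A harmless slip: $R_{\pi/4}(P_1)\cdots R_{\pi/4}(P_j)V$ conjugates by $V$ first and by the rotations afterwards, but your formula $M_W=M_UM_{R_{4w}}\cdots M_{R_1}$ is right.)

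\textbf{The gap.} The subcase $\langle b',z_i\rangle=0$ is one you flag as the main obstacle but do not resolve, and what you write there fails.
\begin{itemize}
\item For $E_{c+z_i}E_{b'+c}$ to send $b'\mapsto z_i$ while fixing $x_i$, you need $\langle c,b'\rangle=\langle c,z_i\rangle=1$. Because $\langle b',x_i\rangle=\langle z_i,x_i\rangle=1$, you also need $\langle c,x_i\rangle=1$.
\item Your requirement $\langle c,x_i\rangle=0$ contradicts $b'+c$ and $c+z_i$ commuting with $x_i$. With it, the composite sends $x_i$ to $x_i+b'+z_i$, undoing the first stage.
\item Your candidate $c=z_i+y$ with $y\perp x_i,z_i$ has $\langle c,z_i\rangle=0$. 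So $E_{c+z_i}=E_y$ fixes $c$ and never reaches $z_i$.
\item The ``delicate situation'' $b'+z_i\in\mathrm{span}\{x_i\}$ does not arise in this subcase. $b'=z_i+x_i$ has $\langle b',z_i\rangle=1$ and is already covered by $E_{b'+z_i}=E_{x_i}$, while $b'=z_i$ is trivial.
\end{itemize}

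\textbf{The fix is short.} $c=x_i+z_i$ satisfies all three conditions whenever $\langle b',x_i\rangle=1$ and $\langle b',z_i\rangle=0$. This gives $E_{x_i}E_{b'+x_i+z_i}$, which is exactly the paper's choice $\alpha_3=\tilde{v}^{(n+k)}+e_k+f_k$, $\alpha_4=e_k$. Alternatively, reuse your first-stage argument. For $b'\neq z_i$ the vectors $x_i,b',z_i$ are linearly independent, since the only possible relation, $b'=x_i+z_i$, is excluded by $\langle b',z_i\rangle=0$. Nondegeneracy of $S^\perp$ then yields $c\in S^\perp$ with all three pairings equal to $1$.
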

\begin{proof}
We proceed by induction.
Let
\begin{equation}
    M^{(k)}=
    \left[
    \begin{array}{cc|cc}
        I_k & 0  &  0   & 0 \\
        0   & *  &  0   & * \\
        \hline
        0   & 0  &  I_k & 0 \\
        0   & *  &  0   & *
    \end{array}
    \right].
\end{equation}
We will show for $1\le k\le w$, given a matrix of the form $M^{(k-1)}$, that there exists a product of four Pauli $\pi/4$ rotations with matrices $E_{\alpha_1}$, $E_{\alpha_2}$, $E_{\alpha_3}$, and $E_{\alpha_4}$, such that $E_{\alpha_4}E_{\alpha_3}E_{\alpha_2}E_{\alpha_1}M^{(k-1)}$ is of the form $M^{(k)}$.
It follows that $U=M^{(0)}$ can be mapped to an operator $M^{(w)}$ that has support only on the last $n-w$ qubits with $4w$ Pauli $\pi/4$ rotations.

Let $v^{(k)}$ denote the $k$th row of $M^{(k-1)}$, $e_k$ denote the $k$th vector in the standard symplectic basis and $f_k=e_{n+k}$, and let $A_{k-1}$ be the symplectic subspace spanned by $\{e_1,\ldots e_{k-1},f_1,\ldots f_{k-1}\}$ corresponding to the Pauli group on the first $k-1$ qubits.
We will choose $\alpha_1,\alpha_2,\alpha_3,\alpha_4\in A_{k-1}^\perp$ to ensure that these operations act as the identity on all $a\in A_{k-1}$.
We now consider two cases, and in each choose $\alpha_1,\alpha_2$ such that they map $v^{(k)}$ to $e_k$.

First, if $\langle v^{(k)},e_k\rangle=1$, let $\alpha_1=v^{(k)}+e_k$ and $\alpha_2=0$.
Then $\langle v^{(k)},v^{(k)}+e_k\rangle=1$, and so
\begin{equation}
    E_{v^{(k)}+e_k}(v^{(k)})=e_k.
\end{equation}

Second, if $\langle v^{(k)},e_k\rangle=0$, then if $\langle v^{(k)},f_k\rangle=1$ let $u_k=f_k$, and if $\langle v^{(k)},f_k\rangle=0$ let $u_k=f_k+e_{n+\delta\pmod{2n}}$, where $\delta>k$ is the position of the first nonzero element of $v^{(k)}$.
Moreover, let $\alpha_1=v^{(k)}+u_k$ and $\alpha_2=e_k+u_k$.
Then $\langle v^{(k)},v^{(k)}+u_k\rangle=\langle u_k,e_k+u_k\rangle=1$, and so
\begin{equation}
    E_{e_k+u_k}E_{v^{(k)}+u_k}\big(v^{(k)}\big)=E_{e_k+u_k}(u_k)=e_k.
\end{equation}

Let $\tilde{M}^{(k-1)}=E_{e_k+u}E_{v^{(k)}+u}\big(M^{(k-1)}\big)$ and the $i$th row of $\tilde{M}^{(k-1)}$ be $\tilde{v}^{(i)}$.
As this preserves the symplectic product, $1=\langle v^{(k)},v^{(n+k)}\rangle=\langle\tilde{v}^{(k)},\tilde{v}^{(n+k)}\rangle=\langle e_k,\tilde{v}^{(n+k)}\rangle$.
We now consider two cases, and in each choose $\alpha_3,\alpha_4$ such that they fix $e_k$ and map $\tilde{v}^{(n+k)}$ to $f_k$.

First, if $\langle\tilde{v}^{(n+k)},f_k\rangle=1$, let $\alpha_3=\tilde{v}^{(n+k)}+f_k$ and $\alpha_4=0$.
Then $\langle\tilde{v}^{(n+k)},\tilde{v}^{(n+k)}+f_k\rangle=1$, and so
\begin{equation}
    E_{\tilde{v}^{(n+k)}+f_k}(\tilde{v}^{(n+k)})=f_k.
\end{equation}
We also have $\langle e_k,\tilde{v}^{(n+k)}+f_k\rangle=0$, which implies that $E_{\tilde{v}^{(n+k)}+f_k}(e_k)=e_k$.

Second, if $\langle\tilde{v}^{(n+k)},f_k\rangle=0$, let $\alpha_3=\tilde{v}^{(n+k)}+e_k+f_k$ and $\alpha_4=e_k$.
Then $\langle\tilde{v}^{(n+k)},\tilde{v}^{(n+k)}+e_k+f_k\rangle=1$, and so
\begin{equation}
    E_{e_k}E_{\tilde{v}^{(n+k)}+e_k+f_k}(\tilde{v}^{(n+k)})=E_{e_k}(e_k+f_k)=f_k.
\end{equation}
We also have $\langle e_k, \tilde{v}^{(n+k)}+e_k+f_k\rangle=0$, which implies that $E_{e_k}E_{\tilde{v}^{(n+k)}+e_k+f_k}(e_k)=e_k$.

Hence in all cases
\begin{equation}
    M^{(k)}=E_{\alpha_4}E_{\alpha_3}E_{\alpha_2}E_{\alpha_1}M^{(k-1)},
\end{equation}
which completes the proof.
\end{proof}

Now we consider the specific case of cleaning a memory port.

\begin{lemma}\label{lem:cleaning2}
Let $U$ be an $n$-qubit Clifford operator that is a product of Clifford operators that act either trivially or as the control of a CNOT on the first $w\le n$ qubits.
Then there exists a sequence of $2w$ Pauli operators $P_1,\dots,P_{2w}$ such that $R_{\pi/4}(P_1)R_{\pi/4}(P_2)\ldots R_{\pi/4}(P_{2w})U$ is a Clifford operator supported only on the last $n-w$ qubits.
\end{lemma}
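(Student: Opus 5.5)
The plan is to exploit the restricted structure of $U$ so that each of the first $w$ qubits costs only two $\pi/4$ rotations rather than the four needed in \cref{lem:cleaning1}. First I would establish the form of $M_U$ on the rows indexed by the first $w$ qubits. Interpreting the hypothesis as saying that each factor of $U$ either acts only on the last $n-w$ qubits or is a CNOT with control among the first $w$ qubits, I will show by induction over the factors that, for $1\le i\le w$,
\begin{equation}
f_i M_U = f_i, \qquad e_i M_U = e_i + q_i ,
\end{equation}
where each $q_i$ is supported only on the last $n-w$ qubits (in both the $X$ and $Z$ halves). The induction step is direct. A factor on the last $n-w$ qubits fixes $e_i,f_i$ for $i\le w$ and maps a vector supported on the last $n-w$ qubits to another such vector. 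A CNOT with control $c\le w$ fixes $Z_c$ and maps $X_c\mapsto X_cX_t$. Since the $X_i$ mutually commute and conjugation preserves commutation, the $e_i+q_i$ are mutually orthogonal. Hence $\langle q_i,q_j\rangle=0$ for all $i,j\le w$.

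Next, for each $i=1,\dots,w$ in turn, I apply the two rotations $E_{q_i+f_i}$ followed by $E_{f_i}$. Using that the $q$'s are disjoint from qubit $i$, we get $\langle e_i+q_i,\,q_i+f_i\rangle=\langle e_i,f_i\rangle+\langle q_i,q_i\rangle=1$. So $E_{q_i+f_i}$ sends $e_i+q_i\mapsto e_i+f_i$, and then $E_{f_i}$ sends $e_i+f_i\mapsto e_i$. Both rotations fix $f_i$, since $\langle f_i,q_i+f_i\rangle=0$. They also fix $f_j$ for $j\ne i$, since these are orthogonal to $f_i$ and to $q_i$. For $j\ne i$ they fix $e_j+q_j$, because $\langle e_j+q_j,q_i+f_i\rangle=\langle q_j,q_i\rangle=0$ and $\langle e_j+q_j,f_i\rangle=0$. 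They likewise fix the already-cleaned rows $e_j$ for $j<i$. After all $2w$ rotations, the rows indexed by $e_i$ and $f_i$, $i\le w$, are exactly $e_i$ and $f_i$.

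Finally I conclude the block form $M^{(w)}$ of \cref{lem:cleaning1}. The rows of a symplectic matrix satisfy $\langle r_a,r_b\rangle=J_{ab}$. Every row indexed by a qubit $j>w$ is therefore orthogonal to all $e_i,f_i$ with $i\le w$, so it has zero entries in the columns of the first $w$ qubits. The resulting operator thus acts as identity on the first $w$ qubits and is supported only on the last $n-w$, as required. Translating the product of rotation matrices back into the ordering $R_{\pi/4}(P_1)\cdots R_{\pi/4}(P_{2w})U$ uses the same row-action convention as in \cref{lem:cleaning1}.

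The main obstacle is the first step. I need to pin down precisely what the hypothesis allows, and confirm that the $q_i$ stay on the last $n-w$ qubits and remain mutually commuting under arbitrary Cliffords on those qubits. I would handle this by an explicit induction on the factors of $U$. A secondary point of care is keeping the left/right multiplication convention consistent, so that the rotations correspond to the stated operator order.
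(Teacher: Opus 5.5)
Your first step is false, and the rest of the argument rests on it. Conjugation by $\mathrm{CNOT}_{c,t}$ fixes $Z_c$ and sends $X_c\mapsto X_cX_t$, but it also sends $Z_t\mapsto Z_cZ_t$. So if an earlier factor on the last $n-w$ qubits has given $q_i$ a $Z_t$ component, a later CNOT with control $c\le w$ and target $t$ puts $Z_c$ support into the image of $X_i$, whether $c=i$ or $c\neq i$. Concretely, take $n=2$, $w=1$ and $U=\mathrm{CNOT}_{1,2}H_2\mathrm{CNOT}_{1,2}$. This satisfies the hypothesis, yet $UX_1U^\dagger=\pm Y_1Y_2$, i.e.\ $e_1M_U=e_1+f_1+e_2+f_2$.

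In general you only have $f_iM_U=f_i$ and $e_iM_U=e_i+z_i+q_i$, where $z_i$ is a $Z$-type vector on the first $w$ qubits and may contain $f_i$. This is exactly why the paper's initial block form keeps a $*$ in the $Z$-columns of the first $w$ qubits for the first $w$ rows. With such a $z_i$, your two rotations send $e_i+z_i+q_i$ to $e_i+z_i$ rather than $e_i$. In the example, $E_{q_1+f_1}$ already maps the row to $e_1$, and $E_{f_1}$ then maps it to $e_1+f_1$. The cleaned operator therefore still acts as a phase gate on qubit~1. If you instead define $q_i:=e_iM_U+e_i$ literally, your identity $\langle e_i+q_i,q_i+f_i\rangle=1$ fails whenever $q_i$ contains $f_i$.

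The missing ingredient is the paper's case split. Let $v$ be the current row for $X_i$.
\begin{itemize}
\item If $\langle v,e_i\rangle=1$ (the $Z_i$-coefficient is $1$), the single rotation $E_{v+e_i}$ already maps $v\mapsto e_i$.
\item If $\langle v,e_i\rangle=0$, use $E_{v+e_i+f_i}$ followed by $E_{f_i}$.
\end{itemize}
These axes contain all of $z_i$, so the $Z$-support on the port is cancelled rather than left behind. Your choice $q_i+f_i$ coincides with the paper's second case only when $z_i=0$.

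The new axes still have the properties you need:
\begin{itemize}
\item They have no $X$-support on the first $w$ qubits, since $\langle v,f_j\rangle=\delta_{ij}$ for $j\le w$. Hence they fix every $f_j$ with $j\le w$.
\item They have no support on already-cleaned qubits $j<i$, because $v$ is orthogonal to the cleaned rows $e_j,f_j$.
\end{itemize}
You do not need, and in general do not get, that they fix the other uncleaned rows. With this change, your closing argument goes through as written: symplecticity forces the rows indexed by qubits $j>w$ to vanish on the first $w$ qubits.
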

\begin{proof}
Since CNOTs commute with $Z$ operators on their control qubits, $U$ must commute with all $Z$-type operators with support on the first $w$ qubits.
Consequently, the matrix representation of $U$ has the form
\begin{equation}
    U=
    \left[
    \begin{array}{cc|cc}
        I_{w} & *  & *   & * \\
        0     & *  & *   & * \\
        \hline
        0     & 0  & I_w & 0 \\
        0     & *  & *   & * 
    \end{array}
    \right].
\end{equation}

We proceed by induction, analogously to the proof of \cref{lem:cleaning1}.
Let
\begin{equation}
    M^{(k)}=
    \left[
    \begin{array}{ccc|ccc}
        I_k & 0       & 0  &  0   & 0       & 0 \\
        0   & I_{w-k} & *  &  0   & *       & * \\
        0   & 0       & *  &  0   & *       & * \\
        \hline
        0   & 0       & 0  &  I_k & 0       & 0 \\
        0   & 0       & 0  &  0   & I_{w-k} & 0 \\
        0   & 0       & *  &  0   & *       & * 
    \end{array}
    \right].
\end{equation}
We will show for $1\le k\le w$, given a matrix of the form $M^{(k-1)}$, that there exists a product of two Pauli $\pi/4$ rotations with matrices $E_{\alpha_1}$ and $E_{\alpha_2}$, such that $E_{\alpha_2}E_{\alpha_1}M^{(k-1)}$ is of the form $M^{(k)}$.
It follows that $U=M^{(0)}$ can be mapped to an operator $M^{(w)}$ that has support only on the last $n-w$ qubits with $2w$ Pauli $\pi/4$ rotations.

Let $v^{(k)}$ denote the $k$th row of $M^{(k-1)}$, $e_k$ denote the $k$th vector in the standard symplectic basis and $f_k=e_{n+k}$, and let $A_{k-1}$ be the symplectic subspace spanned by $\{e_1,\ldots e_{k-1},f_1,\ldots f_{k-1}\}$ corresponding to the Pauli group on the first $k-1$ qubits.
We will choose $\alpha_1,\alpha_2\in A_{k-1}^\perp$ to ensure that these operations act as the identity on all $a\in A_{k-1}$, and such that they also fix $f_j$ for $k\le j\le w$ and map $v^{(k)}$ to $e_k$.
We now consider two cases, noting that the structure of $M^{(k-1)}$ implies that $\langle v^{(k)},f_j\rangle=\delta_{jk}$ for $1\le j\le w$.

First, if $\langle v^{(k)},e_k\rangle=1$, let $\alpha_1=v^{(k)}+e_k$ and $\alpha_2=0$.
Then $\langle v^{(k)},v^{(k)}+e_k\rangle=1$, and so
\begin{equation}
    E_{v^{(k)}+e_k}\big(v^{(k)}\big)=e_k.
\end{equation}
We also have $\langle f_j,v^{(k)}+e_k\rangle=2\delta_{jk}=0$ for $k\le j\le w$, which implies that $E_{v^{(k)}+e_k}(f_j)=f_j$.

Second, if $\langle v^{(k)},e_k\rangle=0$, let $\alpha_1=v^{(k)}+e_k+f_k$ and $\alpha_2=f_k$.
Then $\langle v^{(k)},v^{(k)}+e_k+f_k\rangle=1$, and so
\begin{equation}
    E_{f_k}E_{v^{(k)}+e_k+f_k}\big(v^{(k)}\big)=E_{f_k}(e_k+f_k)=e_k.
\end{equation}
We also have $\langle f_j,v^{(k)}+e_k\rangle=2\delta_{jk}=0$ for $k\le j\le w$, which implies that $E_{f_k}E_{v^{(k)}+e_k+f_k}(f_j)=f_j$.

Hence in all cases
\begin{equation}
    M^{(k)}=E_{\alpha_2}E_{\alpha_1}M^{(k-1)},
\end{equation}
which completes the proof.
\end{proof}

\bibliography{refs}

\end{document}